\newcommand{\IEEEhspace}[1]{}
\newcommand{\appref}[1]{App.~\ref{#1}}
\theoremstyle{plain}
\newtheorem{theorem}{Theorem}
\newtheorem{proposition}{Proposition}
\newtheorem{corollary}{Corollary}
\theoremstyle{definition}
\newtheorem*{pfmc}{Parse Forest Model-Checking Problem}
\newtheorem*{pp}{{\boldmath\PDL\ }Recognition Problem}
\newtheorem{example}{Example}
\theoremstyle{remark}
\newtheorem{claim}[theorem]{Claim}
\def\@listi{\leftmargin\leftmargini
               \topsep 3\p@ \@plus\p@ \@minus\p@
               \parsep 2\p@ \@plus\p@ \@minus\p@
               \itemsep \parsep}
\renewcommand{\cite}{\citep}
\title{Model Checking Parse Trees}
\author{Anudhyan Boral$^{\text{1}}$\and Sylvain Schmitz$^{\text{2}}$}
\date{$^{\text{1}}$CMI, Chennai, India\\$^{\text{2}}$LSV, ENS Cachan
  \& CNRS, France}
\begin{document}
\maketitle
\begin{abstract}
  Parse trees are fundamental syntactic structures in both
  computational linguistics and compilers construction.  We argue
  in this paper that, in both fields, there are good incentives for
  model-checking sets of parse trees for some word according to a
  context-free grammar.  We put forward the adequacy of propositional
  dynamic logic (PDL) on trees in these applications, and study as a
  sanity check the complexity of the corresponding model-checking
  problem: although complete for exponential time in the general case,
  we find natural restrictions on grammars for our applications and
  establish complexities ranging from nondeterministic polynomial time
  to polynomial space in the relevant cases.
\end{abstract}

\section{Introduction}\label{sec-intro}

The parse trees of a sequence $w$ are employed extensively in
computational linguistics, where they represent constituent analyses
of the natural language sentence $w$, and in compilers, where they
provide the syntactic structure of the input program $w$.  They are
produced by a parsing algorithm on the basis of a grammar, for
instance a context-free one, that typically required quite a bit of
ingenuity in its conception: in each of the two communities, a
subfield of \emph{grammar engineering} has arisen
\citep[e.g.][]{grammarware,tulipa}, dedicated to the principled
development of grammars.  Part of the difficulty in this task stems
from the limited expressiveness of the formalism: for instance, in a
context-free grammar, a production $A\to B\,C$ only states that an
`$A$' node might have two children labeled `$B$' and `$C$' in this
order.  This means that orthogonal considerations, possibly involving
distantly related tree nodes, have to be enforced manually, little by
little in the local rules of the grammar, blowing up its size and
quickly rendering it unmaintainable.

\emph{Model theoretic syntax} provides an alternative approach to
syntax specifications: rather than an `enumerative' formalism that
generates the desired parse trees, one can describe them as models of
a logical formula.  This point of view leads to interesting
consequences for syntactic theory~\citep{pullum01}, but here we are
mostly interested in the conciseness and ease of manipulation of
formul\ae.  Several logical formalisms over trees have been employed to
this end, notably first-order (FO) or monadic second-order (MSO)
logics~\citep{rogers98}, and \emph{propositional dynamic logic on
  trees} (\PDL)~\citep{kracht95,palm99,pdltree}.  In practice however,
model-theoretic approaches suffer from a prohibitively high
complexity, as the known recognizing algorithms essentially amount to a
\emph{satisfiability} check~\citep{cornell00}: given a formula $\varphi$ and
a sequence $w$ to parse, build a formula $\varphi_w$ that recognizes
all the trees that yield $w$, and check $\varphi\wedge\varphi_w$ for
satisfiability.  As satisfiability is in general non-elementary for FO
and MSO formul\ae\ and \textsc{ExpTime}-complete for \PDL\ ones, this
seems like a serious impediment to a larger adoption of
model-theoretic techniques.
\medskip

In this paper, we introduce the \emph{model-checking} problem for parse
trees.  Formally, given a sequence $w$, a grammar $G$,
and a formula $\varphi$, we ask whether all the parse trees of $w$
according to $G$ verify $\varphi$.  It turns out that checking sets of
parse trees of a given $w$, i.e.\ \emph{parse forests}, can be
easier than other classes of tree languages, as could be defined by
tree automata, document type definitions, etc.

This parse forest model-checking problem (PFMC) allows for a `mixed'
approach, where a context-free grammar is employed for a cursory
syntax specification, alongside a logical formula for the fine-tuning.
Because the logical languages we consider are closed under negation,
under this viewpoint, the PFMC problem also answers the recognition
problem for the `conjunction' of the grammar $G$ and the formula
$\varphi$: is there a parse of $w$ according to both $G$ and
$\varphi$?  As $\varphi$ might describe a non-local tree language,
there is a slight expressive gain to be found in such conjunctions,
but our interest lies more in the concision and clarity brought by
refining a grammar with a $\PDL$ formula: it allows to capture
long-distance dependencies that would often require a cumbersome and
error-prone hard-wiring in the grammar, at the expense of an explosion
of the number of nonterminal symbols.

We detail in \autoref{sec-mcpf} two applications of the PFMC; for
these, we found it convenient to use \PDL\ as the logical language for
properties:
\begin{enumerate}
\item In computational linguistics, we advocate a mixed
  approach for model-theoretic syntax, with syntactic structures
  described by the conjunction of a grammar for localized
  specification together with a \PDL\ constraint capturing
  long-distance syntactic phenomena;
\item In compilers construction, 
  \PDL\ formul\ae\ provide a compelling means for parser
  disambiguation~\citep{amb/thorup,amb/klint,Kats2010} by allowing to
  express formally the informal disambiguation rules usually provided
  with grammars for programming languages.
\end{enumerate}
We discuss the appropriateness of our formalizations in some depth,
which allows us to motivate (1)~practically relevant restrictions on
the grammar $G$, and (2)~considering the full logic \PDL\ rather than
some weaker fragments.  We consider the two formalizations proposed in
\autoref{sec-mcpf} as the initial steps of a larger research programme
on the model-checking problem in syntax; we point for instance to
several interesting open issues with the choice of finite labeled
ordered trees as syntactic structures and \PDL\ as logical formalism.

As a first usability check, we investigate the computational
complexity of the PFMC and map the resulting complexity landscape for
the problem in \autoref{sec-cmpl}.  Although the general case is
\textsc{ExpTime}-complete like the \PDL\ satisfiability problem, our
restrictions on grammars lead to more affordable complexities
\begin{enumerate}
\item from \textsc{NPTime}-complete for our linguistic applications,
  where we can assume the grammar to be both $\varepsilon$-free and
  acyclic, 
\item to \textsc{PSpace}-complete for our applications in ambiguity
  filtering, where we can only assume the grammar to be acyclic
\end{enumerate}
(see \autoref{fig-cmpl} for a summary).  Our study also unearthed a
somewhat surprising corollary for model-theoretic syntax
(Cor.~\ref{coro-mts}):
\begin{enumerate}\setcounter{enumi}{2}
\item the \emph{recognition problem} for \PDL, i.e.\ whether there
  exists a tree model with the input word as yield, is
  \textsc{PSpace}-complete if empty labels are forbidden---the best
  algorithms for this were only known to operate in exponential
  time~\citep{cornell00,palm01}.
\end{enumerate}

Interestingly, the PFMC is closely related to a prominent algorithmic
problem studied by the XML community: there the formula $\varphi$ is a
\emph{Core XPath} one---which is equivalent to a restricted fragment
\PDLcr\ of \PDL---and the tree language $L$ is generated by a
\emph{document type definition} (DTD), and the problem is accordingly
referred to as `satisfiability in presence of a DTD'.
\Citet{benedikt08} comprehensively investigate this topic, and in some
restricted cases the problem becomes
tractable~\citep{montazerian07,ishihara09}.  Our applications in
computational linguistics and compilation lead however to a different
setting, where $L$ comes from a class of tree languages smaller than
that of DTDs---and our grammar restrictions have no natural
counterpoints in the XML literature---, but where $\varphi$ requires
more expressive power than that of \PDLcr.  Nevertheless, we will
reuse several proof techniques developed in the XML setting and adapt
them to ours, Prop.~\ref{prop-efpe} being a prime example: it relies
on an extension of the results of \citet{benedikt08} to the full logic
\PDL\ (see Prop.~\ref{prop-dtd}) and on an encoding of a restricted
class of parse forests into non-recursive DTDs.

\section{Propositional Dynamic Logic on Trees}\label{sec-pdl}
\emph{Propositional dynamic logic} (PDL, see \citep{fischer79}) is a
modal logic where ``programs''---in the form of regular expressions
over the relations in a frame---are used as modal operators.  
Originally motivated by applications in computational
linguistics~\citep{kracht95,palm99,pdltree}, \emph{propositional dynamic
logic on trees} (\PDL) has also been extensively studied in the XML
community~\citep{marx05,benedikt08,tencate10}, where it is better
known as \emph{Regular XPath}.  It features two
relations: the \emph{child} relation $\downarrow$ between a parent
node and any of its immediate children, and the \emph{right-sibling}
relation $\rightarrow$ between a node and its immediate right sibling.

\subsection{Syntax and Semantics}
Formally, a \PDL\ formula $\varphi$ is defined by the abstract syntax
{\begin{align*}
  \varphi &::= p\mid\top\mid\neg\varphi\mid\varphi\wedge\varphi\mid\tup{\pi}\varphi\tag{node formul\ae}\\
  \pi &::={\downarrow}\mid{\rightarrow}\mid\pi\mathbin{;}\pi\mid\pi+\pi\mid\pi^\ast\mid\pi^{-1}\mid\varphi?\tag{path
    formul\ae}
\end{align*}}%
where $p$ is an atomic proposition ranging over some countable set
$\#{AP}$---because we only deal with satisfiability questions, we can
actually assume $\#{AP}$ to be finite.  We enrich this syntax as usual
by defining box modalities as duals
$[\pi]\varphi\eqdef\neg\tup{\pi}\neg\varphi$ of the diamond ones, inverses
to the atomic path formul\ae\ as ${\uparrow}\eqdef{\downarrow^{-1}}$
and ${\leftarrow}\eqdef{\rightarrow^{-1}}$, and boolean connectives
${\bot}\eqdef\neg\top$,
$\varphi_1\vee\varphi_2\eqdef\neg(\neg\varphi_1\wedge\neg\varphi_2)$,
$\varphi_1\imply\varphi_2\eqdef\neg\varphi_1\vee\varphi_2$, and
$\varphi_1\equiv\varphi_2\eqdef(\varphi_1\imply\varphi_2)\wedge(\varphi_2\imply\varphi_1)$.

Formul\ae\ are interpreted over \emph{finite ordered trees} $t$ with
nodes labeled by propositions in $\#{AP}$.  Such a tree $t$ is a partial
function from the set $\+N^\ast$ of finite sequences of natural
numbers, i.e.\ the set of tree nodes, to $\#{AP}$, %
s.t.\ its \emph{domain}
$\dom t$ is (1)~\emph{finite}, (2)~\emph{prefix closed}, i.e.\ $uv$ in
$\dom t$ for some $u,v$ in $\+N^\ast$ implies that $u$ is also in
$\dom t$, and (3)~\emph{predecessor closed}, i.e.\ if $ui$ is in $\dom
t$ for some $u$ in $\+N^\ast$ and $i$ in $\+N$, then $uj$ is also in
$\dom t$ for all $j<i$ in $\+N$.  Such a tree can be seen as a
structure $\mathfrak{M}_t=\tup{\dom t,\downarrow_t,\rightarrow_t,t}$
with
{\begin{align*}
 {\downarrow_t}&\eqdef\{(u,ui)\mid ui\in\dom
t\} \\{\rightarrow_t}&\eqdef\{(ui,u(i+1))\mid u(i+1)\in\dom t\}\;.
\end{align*}}%
We define the \emph{interpretations} of \PDL\ formul\ae\ over $t$
inductively by
{\begin{align*}
    \sem{p}_t &\eqdef\{u\in\dom t\mid p=t(u)\}\!\!
  & \sem{\top}_t&\eqdef\dom t
  \\\sem{\varphi_1\wedge\varphi_2}_t&\eqdef\sem{\varphi_1}_t\cap\sem{\varphi_2}_t
  & \sem{\neg\varphi}_t&\eqdef\dom t\setminus\sem{\varphi}_t
  \\\sem{\varphi?}_t&\eqdef\{(u,u)\mid u\in\sem{\varphi}_t\}
  &\!\!\!\!\!\sem{\tup{\pi}\varphi}_t&\eqdef\sem{\pi}_t^{-1}(\sem{\varphi}_t)
  \\\sem{\downarrow}_t&\eqdef{\downarrow_t}
  &\sem{\rightarrow}_t&\eqdef{\rightarrow_t}
  \\\sem{\pi_1\mathbin{;}\pi_2}_t&\eqdef\sem{\pi_1}_t\fatsemi\sem{\pi_2}_t
  &\sem{\pi^\ast}_t&\eqdef\sem{\pi}_t^\ast
  \\\sem{\pi_1+\pi_2}_t&\eqdef\sem{\pi_1}_t\cup\sem{\pi_2}_t
  & \sem{\pi^{-1}}_t&\eqdef\sem{\pi}_t^{-1}
\end{align*}}%
Observe that these are sets of nodes included in $\dom t$ in the case
of node formul\ae, but binary relations included in $\dom t\times\dom
t$ in the case of path formul\ae; thus $\sem{\pi}_t^\ast$ denotes the
reflexive transitive closure of $\sem{\pi}_t$ and $\sem{\pi}_t^{-1}$
its inverse, while $\sem{\pi_1}_t\fatsemi\sem{\pi_2}_t$ denotes the
composition of the two relations $\sem{\pi_1}_t$ and $\sem{\pi_2}_t$.
A node $u$ in $\dom t$ \emph{satisfies} $\varphi$, noted
$t,u\models\varphi$, if $u$ is in $\sem{\varphi}_t$.  A tree $t$
satisfies $\varphi$, noted $t\models\varphi$, if its root
$\varepsilon$ satisfies $\varphi$; we let $\sem\varphi\eqdef\{t\mid
t\models\varphi\}$ be the set of models of $\varphi$.

\begin{example}[Basic Navigation]\label{ex-nav}
  Several simple formul\ae\ helping navigation can be defined:
  $\#{root}\eqdef\neg\tup\uparrow\top$ holds only at the root,
  $\#{leaf}\eqdef\neg\tup\downarrow\top$ only at a leaf node,
  $\#{first}\eqdef\neg\tup\leftarrow\top$ at a leftmost one, and
  $\#{last}\eqdef\neg\tup\rightarrow\top$ at a rightmost one.

  We can also define the \emph{first-child} relation
  ${\swarrow}\eqdef{\downarrow};\#{first}?$, and conversely express
  the child relation as
  ${\downarrow}\equiv{\swarrow};\rightarrow^\ast$: this shows that we
  could work on binary tree models instead of the unranked ones we
  used in our definitions.
\end{example}

\begin{example}[Parse Trees \citep{blackburn93}]\label{ex-cfg}
  Recall that a \emph{context-free grammar} (CFG) is a tuple
  $G=\tup{N,\Sigma,P,S}$ composed of a finite \emph{nonterminal}
  alphabet $N$, a finite \emph{terminal} alphabet $\Sigma$ disjoint
  from $N$ and forming a \emph{vocabulary} $V\eqdef
  N\uplus\Sigma$, a finite set of \emph{productions} $P\subseteq
  N\times V^\ast$, and an \emph{axiom} $S\in N$.  We denote the empty
  sequence by $\varepsilon$ and write
  $\Sigma'\eqdef\Sigma\uplus\{\varepsilon\}$ and
  $V'\eqdef V\uplus\{\varepsilon\}$.

  Given a context-free grammar $G$, its set of parse trees forms a
  local tree language, which can be expressed as $\sem{\varphi_G}$ for
  a \PDL\ formula $\varphi_G$ with $V'$ as set of atomic
  propositions.  First define a path formula $\pi_\alpha$ that defines
  a sequence of sibling nodes labeled by $\alpha$ in $V^\ast$:
  {\begin{align*}
    \pi_\alpha&\eqdef\begin{cases}
      X?;\rightarrow;\pi_{\alpha'}&\text{if }\alpha=X\alpha',X\in
      V,\alpha'\neq\varepsilon\:,\\
      X?;\#{last}?&\text{if }\alpha=X\in V\:,\\
      \varepsilon?;\#{last}?&\text{otherwise, i.e.\ if }\alpha=\varepsilon\:.
      \end{cases}\\
    \varphi_G&\eqdef S\tag{the root is labeled by $S$}\\
    &\wedge[\downarrow^\ast]\big(%
    \#{leaf}\equiv\bigvee_{a\in\Sigma'}a\tag{leaves
      are terminals and internal nodes nonterminals}\\
    &\phantom{\wedge[\downarrow^\ast]\big(}\wedge\bigwedge_{A\in
      V}A\imply\bigvee_{A\to \alpha}\tup{\swarrow;\pi_\alpha}\top\big)\,.\tag{productions are enforced}
  \end{align*}}%
\end{example}

\subsection{The Conditional Fragment}

We will consider in this paper several fragments of \PDL, most
importantly the \emph{conditional path} fragment \PDLcp\
\citep{palm99,marx05}, with a restricted syntax on path formul\ae\
{\begin{align*}
  \pi&::=\alpha\mid\pi\mathbin{;}\pi\mid\pi+\pi\mid\varphi?\mid(\alpha;\varphi?)^\ast\tag{conditional paths}\\
  \alpha&::={\leftarrow}\mid{\rightarrow}\mid{\uparrow}\mid{\downarrow}\;.\tag{atomic
    paths}
\end{align*}}%
This fragment is of particular relevance, because it extends the
\emph{core language} \PDLcr\ \citep{blackburn95,gottlob02} (which
features $\alpha^\ast$ instead of $(\alpha;\varphi?)^\ast$) and captures
exactly first-order logic over finite ordered trees with the two
relations $\rightarrow^+$ and $\downarrow^+$ \citep{marx05}.

\begin{example}[Depth-First Traversal]\label{ex-gs}
  Observe that the formul\ae\ in examples~\ref{ex-nav}
  and~\ref{ex-cfg} are actually in \PDLcr.  The \emph{depth-first
    traversal} relation ${\prec}\eqdef
  (\#{last}?;\uparrow)^\ast;\rightarrow;(\downarrow;\#{first}?)^\ast$
  is an example of a path that is not definable in \PDLcr---this can
  be checked for instance using an Ehrenfeucht Fra\"iss\'e argument.
\end{example}
More generally, \PDLcp\ allows to express relations akin to
LTL's \emph{until} and \emph{since} modalities; see
e.g.\ \citep{LS-jal10}.
We denote by $\PDL[\downarrow]$ (resp.\ $\PDLcp[\downarrow]$,
$\PDLcr[\downarrow]$) the fragments with only downward navigation,
i.e.\ without the $\rightarrow$, $\leftarrow$, and $\uparrow$ atomic
paths.

\section{Model-Checking Parse Trees}\label{sec-mcpf}
Many problems arising naturally with \PDL\ are decidable, notably the
\begin{description}%
\item[model-checking]\IEEEhspace{3.5em} problem: given a tree $t$ and a formula
  $\varphi$, does $t\models\varphi$?  This is known to be in
  \textsc{PTime} even for larger fragments of PDL~\citep{lange06}.
\item[satisfiability]\IEEEhspace{2.1em} problem: given a formula $\varphi$, does there
  exist a tree $t$ s.t.\ $t\models\varphi$?  This is known to be
  \textsc{ExpTime}-complete~\citep{pdltree}.
\end{description}
In the context of XML processing and XPath, an intermediate question
between model-checking and satisfiability also arises:
\begin{description}
\item[satisfiability in presence of a tree
language\textrm{:}]\IEEEhspace{15em} given a formula
  $\varphi$ and a regular tree language $L$, does there exist a tree
  $t\in L$ s.t.\ $t\models\varphi$?
\end{description}
Due to its initial XML motivation, the basic case for this problem is
that of a $\PDLcr[\downarrow]$ formula (a downward Core XPath query)
and of a local tree language (described by a DTD), but many variants
exist~\citep{benedikt08,montazerian07,ishihara09}---in particular one
where the tree language is the language of infinite trees of a two-way
alternating parity tree automaton, which is used by \citet{goller09}
to prove that the satisfiability problem for PDL with intersection and
converse is 2\textsc{ExpTime}-complete.

Our own flavour is motivated by applications in computational
linguistics and programming languages, where the tree language is the
set of parse trees of a word $w$ in $\Sigma^\ast$ according to a CFG
$G=\tup{N,\Sigma,P,S}$ verifying $V'\eqdef\Sigma\uplus
N\uplus\{\varepsilon\}=\#{AP}$.
More precisely, following a well-known construction of
\citet{barhillel61}, if $w=a_1\cdots a_n$ is a word of length $n$, the
set of parse trees or \emph{parse forest} of a CFG
$G$ for $w$, written $L_{G,w}$, is the regular tree language
recognized by a tree automaton $\mathcal{A}_{G,w}$ with state set
{\begin{align*}
  Q_{G,w}&\eqdef\!\{(i,X,j)\mid 0\leq i\leq j\leq n, X\in V'\}\;,
\intertext{{\normalsize alphabet $V'$, initial state $(0,S,n)$, and rules}}
  \delta_{G,w}&\eqdef\{(i_0,A,i_m)\to
    A((i_0,X_1,i_1)\cdots(i_{m-1},X_m,i_m))\\
    &\hspace{2em}\mid A\to X_1\cdots X_m\in P%
    \wedge 0\leq i_0\leq\cdots\leq i_m\leq n\}\\
    &\,\cup\,\{(i,a_{i+1},i+1)\to a_{i+1}()\mid 0\leq i<n\}
    \\ &
    \,\cup\,\{(i,\varepsilon,i)\to\varepsilon()\mid 0\leq i\leq n\}\;.
  \end{align*}}%
Intuitively, a state $(i,X,j)$ of this automaton recognizes the set of
trees derivable in $G$ from the symbol $X$ and spanning the factor
$a_{i+1}\cdots a_j$ of $w$.  This automaton is in general not
\emph{trim}, in that many of its states and rules are never employed
in any accepting configuration, but it can be trimmed in linear
time if required.

\begin{pfmc}[PFMC]\hfill\nopagebreak
\begin{description}
\item[input] a context free grammar $G$, a word $w$, and a \PDL\
  formula $\varphi$,
\item[question]\IEEEhspace{1em} does there exists $t\in L_{G,w}$ s.t.\
  $t\models\varphi$?\qed
\end{description}
\end{pfmc}

Note that the automaton $\?A_{G,w}$ has size $O(|G|\cdot |w|^{m+1})$
if $m$ is the maximal length of a production rightpart in $G$; since
the grammar can be put in quadratic form (corresponding to the
binarization we would also perform on the formula), this typically
results in size $O(|G|\cdot |w|^{3})$.  Therefore, although a tree
automaton for the tree language is not part of the input, it can
nevertheless be constructed in logarithmic space.  
The originality of the problem stems from considering parse forests,
which form a rather restricted class of tree languages.  

In \autoref{sec-cmpl}, we will investigate the complexity of this
problem, and focus on the influence of the acyclicity and
$\varepsilon$-freeness of $G$: Define the \emph{derivation} relation
$\Rightarrow$ between sequences in $V^\ast$ by $\beta
A\gamma\Rightarrow\beta\alpha\gamma$ iff $A\to\alpha$ is a production
of $G$ and $\beta,\gamma$ are arbitrary sequences in $V^\ast$.  A CFG
is \emph{acyclic}, if none of its nonterminals $A$ allows
$A\Rightarrow^+ A$.  A CFG is \emph{$\varepsilon$-free}, if none of
its productions is of form $A\to\varepsilon$ for some nonterminal $A$.

In the remainder of this section, we motivate the
problem by considering applications in computational linguistics
(\autoref{sec-mts}) and compilers construction (\autoref{sec-amb}).

\subsection{Application: Computational Linguistics}\label{sec-mts}
In contrast with many formal theories of syntax that describe natural
language sentences through `generative-enumerative means',
\citet{pullum01} champion \emph{model-theoretic syntax}, where the
syntactic structures (typically, trees) of a natural language are the
models of some logical formula.  They point out interesting
consequences on theories of syntax, but here we betray the spirit of
their work in exchange for some practicality.\footnote{In an
  ESSLLI~2013 lecture, Geoffrey Pullum famously explained that ``model
  theoretic syntax is not generative enumerative syntax with
  constraints'', the latter being exactly what we are proposing as a
  way of mitigating the complexity of model-theoretic techniques.}

Indeed, the usual approach to model-theoretic syntax would be to
describe a language through a \emph{huge} formula $\varphi$ of \PDL\
or monadic second-order logic (MSO) on trees.  Checking whether a
given sentence $w$ can be assigned a structure then reduces to a
recognition problem on a tree automaton $\?A_\varphi$ of exponential (for
\PDL) or non-elementary (for MSO) size~\citep{cornell00}.

\paragraph{A Mixed Approach}  We consider a pragmatic approach,
where
\begin{itemize}
\item a CFG describes the \emph{local} aspects of syntax, e.g.\ that a canonical
  transitive French sentence can be decomposed into a noun phrase
  acting as subject followed by a verb kernel and an object noun
  phrase corresponds to a production
  $\mathrm{S}\to\mathrm{NP}\,\mathrm{VN}\,\mathrm{NP}$, while
\item \emph{long-distance} dependencies and more complex linguistic
constraints are described through \PDL\ formul\ae.
\end{itemize}

\begin{example}[French Clitics]\label{ex-mts}
  A toy grammar for French sentences with predicative verbs like
  `dire' or `demander' could look like (in an extended syntax
  where $X?$ describes zero or one occurrences of symbol $X$):
  {\begin{align*}
    \mathrm{S}&\to\mathrm{NPsuj}?\ \mathrm{VN}\ \mathrm{VPinfobj}?\
    \mathrm{PPaobj}?\!\!\!\!\!\!
    \\\mathrm{NPsuj}&\to\mathrm{d}\ \mathrm{n} 
    \\\mathrm{VN}&\to\mathrm{clsuj}?\ \mathrm{clobj}?\ \mathrm{claobj}?\ \mathrm{v}
    \\\mathrm{VPinfobj}&\to\mathit{de}\ \mathrm{VN}
    \\\mathrm{PPaobj}&\to\text{\textit{\`a}}\ \mathrm{NP}
    \\\mathrm{v}&\to\mathit{demande}\mid\text{\textit{r\'efl\'echir}}
    &\mathrm{clsuj}&\to\mathit{elle}
    \\\mathrm{n}&\to\mathit{philosophe}
    &\mathrm{clobj}&\to\mathit{le}
    \\\mathrm{d}&\to\mathit{la}
    &\mathrm{claobj}&\to\mathit{lui}
  \end{align*}}%
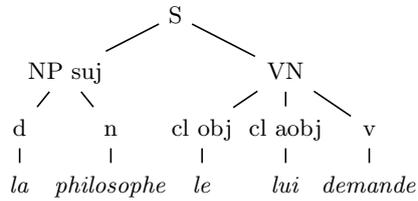
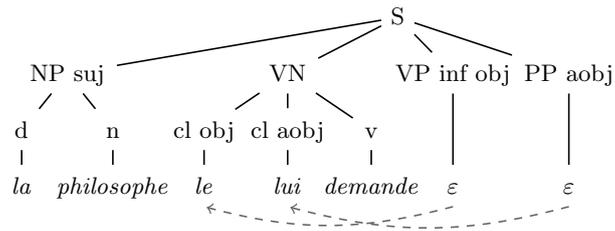
\begin{figure}[tbp]
  \centering
  \subfloat[\label{fig-a}Syntax tree according to Example~\ref{ex-mts}.]{%
    \begin{tikzpicture}[semithick,shorten >=1pt,level/.style={level
      distance=.75cm,sibling distance=1.25cm},every
      node/.style={font=\small,text height=1.5ex,text depth=.25ex}]
    \node{S}
    child[sibling distance=2.9cm]{node{NP suj}
      child[sibling distance=1.2cm]{node{d}
        child{node{\textit{la}}}}
      child[sibling distance=1.2cm]{node{n}
        child{node{\textit{philosophe}}}}}
    child[sibling distance=2.9cm]{node{VN}
      child[sibling distance=1.1cm]{node{cl obj}
        child{node(o){\textit{le}}}}
      child[sibling distance=1.1cm]{node{cl aobj}
        child{node(a){\textit{lui}}}}
      child[sibling distance=1.1cm]{node{v}
        child{node{\textit{demande}}}}};
    \node[below=-.85ex of o]{};
    \end{tikzpicture}%
  }\hspace{.3em}
  \subfloat[\label{fig-b}Analysis with moved constituents.]{%
    \begin{tikzpicture}[semithick,shorten >=1pt,level/.style={level
      distance=.75cm,sibling distance=1.25cm},every
      node/.style={font=\small,text height=1.5ex,text depth=.25ex}]
    \node{S}
    child[sibling distance=2.9cm]{node{NP suj}
      child[sibling distance=1.2cm]{node{d}
        child{node{\textit{la}}}}
      child[sibling distance=1.2cm]{node{n}
        child{node{\textit{philosophe}}}}}
    child[sibling distance=2.9cm]{node{VN}
      child[sibling distance=1.1cm]{node{cl obj}
        child{node(o){\textit{le}}}}
      child[sibling distance=1.1cm]{node{cl aobj}
        child{node(a){\textit{lui}}}}
      child[sibling distance=1.1cm]{node{v}
        child{node{\textit{demande}}}}}
    child[sibling distance=1.45cm]{node{VP inf obj}
      child[level distance=1.5cm]{node(eo){$\varepsilon$}}}
    child[sibling distance=1.5cm]{node{PP aobj}
      child[level distance=1.5cm]{node(ea){$\varepsilon$}}};
    \path[dashed,color=black!60,->]
      (eo.south) edge[bend left=15] (o.south)
      (ea.south) edge[bend left=15] (a.south);
    \end{tikzpicture}%
  }
  \caption{\label{fig-mts}Syntax trees for ``la philosophe le lui demande.''}
\end{figure}%
  Such predicative verbs have a mandatory object and subject, and an
  optional indirect object.  But all three canonical arguments can be
  replaced by \emph{clitics} in the verb matrix $\mathrm{VN}$.  This
  grammar fragment generates reasonable sentences like
  \begin{exe}
    \ex{
      \gll La philosophe demande de r\'efl\'echir.\\
      The philosopher asks to think.\\
    }
    \ex{
      \gll La philosophe le lui demande.\\
      The philosopher it.ACC her.DAT asks.\\
      \trans ``The philosopher asks it to her''.
    }
  \end{exe}\noindent
  where the `le' clitic acts as direct object and `lui' as an
  indirect one (see \autoref{fig-a} for an example syntax tree).  It
  also generates ungrammatical ones like
  \begin{exe}
    \ex[*]{
      \gll Elle le lui demande de r\'efl\'echir.\\
      She it.ACC her.DAT asks to think.\\
      \trans ``She asks it to her to think''.
    }
    \ex[*]{
      \gll demande.\\
      asks.\\
    }
  \end{exe}\noindent
  where there are duplicated or missing arguments.

  Instead of refining the grammar (which might prove impossible, for
  instance if it was automatically extracted from a treebank, i.e.\ a
  set of sentences annotated with syntactic trees), we can filter out
  the unwanted trees using a \PDL\ formula.  To improve readability,
  we take symbols like `$\mathrm{VPinfobj}$' or `$\mathrm{clsuj}$'
  to denote sets of atomic propositions, respectively
  $\{\mathrm{VP},\mathrm{inf},\mathrm{obj}\}$ and
  $\{\mathrm{cl},\mathrm{suj}\}$ in this instance, and refine our
  grammar with the following formula: 
  {\begin{align*}
  [\downarrow^\ast]\mathit{demande}&\imply\big(\tup{(\uparrow;\uparrow;\rightarrow^+)+(\uparrow;\leftarrow^+;\mathrm{cl}?)}\mathrm{obj}\tag{at
  least one object}\\
  &\qquad\wedge\tup{(\uparrow;\uparrow;\leftarrow)+(\uparrow;\leftarrow^+;\mathrm{cl}?)}\mathrm{suj}\tag{at
  least one subject}\\
  &\qquad\wedge\bigwedge_{f\in\{\mathrm{suj},\mathrm{obj},\mathrm{aobj}\}}\tup{\uparrow;\leftarrow^+;\mathrm{cl}?}f\\&\hspace{9em}\imply\neg\tup{\uparrow;\uparrow;(\leftarrow+\rightarrow^+)}f\big)\tag{clitic arguments forbid the matching canonical
  arguments} \end{align*}}

  Interestingly, such \PDL\ constraints can easily be tested against
  tree corpora to check their validity; see~\citep{lai10} on using
  \PDL-like query languages to this end.  We checked that the above
  \PDL\ formula was satisfied by the trees in the Sequoia
  treebank~\citep{sequoia} using an XPath processor: note that our
  formula is indeed in \PDLcr.
\end{example}

\paragraph{Discussion}  In this approach, the CFG can be a very
permissive, over-generating one, like the probabilistic grammars
extracted from treebanks,\footnote{\Citet{tablc} finds an
average of $7.2\times 10^{27}$ different parse trees per sentence with
a grammar extracted from the Penn treebank!} since it is later refined
by the \PDL\ constraints.  We are not aware of any linguistic
rationale for cycles in CFGs; on the other hand,
$\varepsilon$-productions are sometimes used as placeholders for \emph{moved
constituents}.  However, in such analyses, the moved constituent and
the placeholder are \emph{coindexed}, i.e.\ related through an
additional relation, which
\begin{itemize}
\item requires a richer class of models than mere
  trees over a finite alphabet if we want to make the coindexation
  explicit (see \autoref{fig-b} for an example)---one could consider
  \emph{data trees} to this end \citep{bojanczyk12,figueira12}---, and
\item can be simulated by a \PDL\ formula, as seen with the connection
  we establish between a clitic and the corresponding missing argument
  in Example~\ref{ex-mts}.
\end{itemize}
We therefore expect our grammars to be both acyclic and
$\varepsilon$-free---and we could check that this was indeed the case
on the three rather different CFGs proposed by \citet{tablc} for
natural language parsing benchmarks.

On the logical side, it seems necessary to be able to use e.g.\
depth-first traversals (recall Example~\ref{ex-gs}).  \citet{palm99}
and \citet{lai10} study the question in much greater detail and argue
that \PDLcp\ provides an appropriate expressiveness for linguistic
queries.

\subsection{Application: Ambiguity Filtering}\label{sec-amb}
Ambiguities in context-free grammars describing the syntax of
programming languages are a severe issue, as they might lead to
different semantic interpretations, and complicate the use of
deterministic parsers---they basically require manual fiddling.
They are also quite useful, as they allow for more concise and more
readable grammars, and it is actually uncommon to find a language
reference proposing an unambiguous grammar.

A nice way of dealing with ambiguities at parse time is to build a
parse forest and \emph{filter out} the unwanted trees
\citep{amb/klint}.  In contrast with tinkering with parsers, this
allows to implement the `side constraints' provided in the main text
of language references as \emph{declarative rules}, which, beyond
readability and maintainability concerns~\citep{Kats2010}, also
enables some amount of static reasoning and optimization.

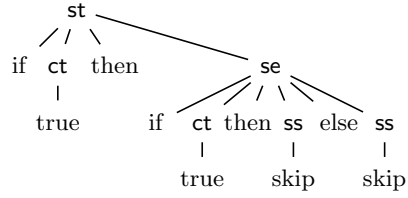
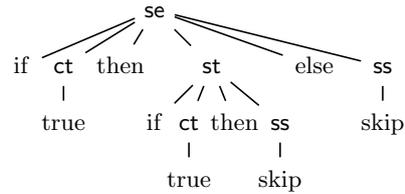
\begin{figure}[tbp]
  \centering
  \subfloat[\label{fig:shift}Parse when preferring shift over reduce.]{%
    \begin{tikzpicture}[semithick,shorten >=1pt,level/.style={level
      distance=.75cm,sibling distance=1.25cm},every
      node/.style={font=\small,text height=1.5ex,text depth=.25ex}]
    \node{$\#{st}$}
    child[sibling distance=.5cm]{node{$\mathrm{if}$}}
    child[sibling distance=.5cm]{node{$\#{ct}$}
      child{node{$\mathrm{true}$}}}
    child[sibling distance=1cm]{node{$\mathrm{then}$}}
    child[sibling distance=1.7cm]{node{$\#{se}$}
      child[sibling distance=.6cm]{node{$\mathrm{if}$}}
      child[sibling distance=.6cm]{node{$\#{ct}$}
        child{node{$\mathrm{true}$}}}
      child[sibling distance=.6cm]{node{$\mathrm{then}$}}
      child[sibling distance=.6cm]{node{$\#{ss}$}
        child{node{$\mathrm{skip}$}}}
      child[sibling distance=.6cm]{node{$\mathrm{else}$}}
      child[sibling distance=.6cm]{node{$\#{ss}$}
        child{node(a){$\mathrm{skip}$}}}};
    \node[right=.2em of a]{};
    \end{tikzpicture}%
  }\hspace{3em}
  \subfloat[\label{fig:reduce}Parse when preferring reduce over shift.]{%
    \begin{tikzpicture}[semithick,shorten >=1pt,level/.style={level
      distance=.75cm,sibling distance=1.25cm},every
      node/.style={font=\small,text height=1.5ex,text depth=.25ex}]
    \node{$\#{se}$}
    child[sibling distance=.7cm]{node{$\mathrm{if}$}}
    child[sibling distance=.8cm]{node{$\#{ct}$}
      child{node{$\mathrm{true}$}}}
    child[sibling distance=.9cm]{node{$\mathrm{then}$}}
    child[sibling distance=1.5cm]{node{$\#{st}$}
      child[sibling distance=.5cm]{node{$\mathrm{if}$}}
      child[sibling distance=.6cm]{node{$\#{ct}$}
        child{node{$\mathrm{true}$}}}
      child[sibling distance=.6cm]{node{$\mathrm{then}$}}
      child[sibling distance=.6cm]{node{$\#{ss}$}
        child{node{$\mathrm{skip}$}}}}
    child[sibling distance=1.4cm]{node{$\mathrm{else}$}}
    child[sibling distance=1.2cm]{node{$\#{ss}$}
      child{node(a){$\mathrm{skip}$}}};
    \node[right=.2em of a]{};
    \end{tikzpicture}%
  }
  \caption{\label{fig:amb}Two parses for the ambiguous input ``if true
  then if true then skip else skip'' with the grammar of Example~\ref{ex-amb}.}
\end{figure}
\begin{example}[Dangling Else]\label{ex-amb}
  We propose to use \PDL\ formul\ae\ to filter out unwanted parses.  
  Consider the following regular tree grammar for
  statements:\footnote{We use a regular tree grammar in a restricted
    way to label internal nodes differently depending on the chosen
    production; this allows for a simpler \PDL\ formula but has
    otherwise no impact as the language remains local.}
  {\begin{align*}
    S&\to \#{st}(\mathrm{if}\ C\ \mathrm{then}\ S)
      \mid\#{se}(\mathrm{if}\ C\ \mathrm{then}\ S\ \mathrm{else}\ S)
      \\&\;\;\;\,\mid\#{sw}(\mathrm{while}\ C\ S)
      \mid\#{ss}(\mathrm{skip})\\
    C&\to\#{ct}(\mathrm{true})\mid\#{cf}(\mathrm{false})
  \end{align*}}%
  Feeding this grammar to a LALR(1) parser generator like GNU/bison,
  we find a single shift/reduce conflict, where the parser has a
  choice on inputs like ``if true then if true then skip else skip'',
  upon reaching the `$\mathrm{else}$' symbol, between reading further
  (\autoref{fig:shift}), and reducing first and leaving this
  $\mathrm{else}$ for later (\autoref{fig:reduce}).  The usual
  convention in programming languages is a greedy one, where shift is
  always chosen.  However, disambiguation by choosing between shift or
  reduce parsing actions is error-prone, and there
  are cases where both alternatives are incorrect on some
  inputs (see~\citep{scico/Schmitz10} for such an example in Standard ML).
 
  A \PDL\ formula that accepts the desired tree of \autoref{fig:shift}
  but rejects the one of \autoref{fig:reduce} should check that no
  `$\mathrm{else}$' node can be next in a depth-first traversal (in
  the sense of Example~\ref{ex-gs}) from an `$\#{st}$' node:
  {\begin{align*} \neg\tup{\downarrow^\ast}(\#{st}\wedge\tup{\prec}\mathrm{else})\,.  \end{align*}}%
  Observe that a depth-first traversal $\prec$ is really needed here,
  because the `$\#{st}$' node can be at the end of an arbitrarily
  long sequence of `$\#{sw}$' nodes from nested `$\mathrm{while}$'
  statements.

  A very similar approach was proposed by \citet{amb/thorup}, who used
  simple tree patterns for similar purposes.  Both tree patterns and
  \PDL\ formul\ae\ can be compiled into the grammar, so that only the
  desired trees can be generated, allowing to use deterministic
  parsers or ambiguity checking tools~\citep{scico/Schmitz10}.  \PDL\
  formul\ae\ are strictly more expressive than patterns; the dangling
  else example required an involved extension of patterns
  in~\citep{amb/thorup96}.
\end{example}

\paragraph{Discussion}  The grammars used for programming
languages are always acyclic---tools like GNU/bison will detect and
reject cyclic grammars---but $\varepsilon$-productions
are fairly common.

On the logic side, although the formula of Example~\ref{ex-amb} is
in \PDLcp\ and not in \PDLcr, it uses depth-first traversals in a
restricted manner, and they could be expressed in XPath~1.0 as
{\small\texttt{(descendant::*|following::*)[1]}}, which selects the
first node in document order among all descendants and right siblings.
We expect \PDLcp\ to be expressive enough for most tasks,
but \emph{layout sensitive} syntax would be beyond its grasp: in
programming languages like Haskell or Python, the indentation level is
used to delimit statement blocks---differentiating between possible
parses then requires some limited counting capabilities, or infinite
label sets with order as in a recent formalization by \citet{adams13}.

Excluding a tree considered individually is one approach among others
to ambiguity filtering~\citep{amb/klint}.  A popular alternative
considers the parse forest as a whole, i.e.\ the tree automaton $\?A_{G,w}$
itself.  The ambiguity resolution of Example~\ref{ex-amb} on the input
``if true then if true then skip else skip'' can be simply stated as a
preference $\#{st}>\#{se}$ implying that the rule
{\small\begin{align*}
 (0,\!S,\!9)&\to\#{st}\big((0,\!\mathrm{if},\!1) (1,\!C,\!2) (2,\!\mathrm{then},\!3)
 (3,\!S,\!9)\big)
 \intertext{{\normalsize is preferred over the rule}}
 (0,\!S,\!9)&\to\#{se}\big((0,\!\mathrm{if},\!1) (1,\!C,\!2) (2,\!\mathrm{then},\!3)
 (3,\!S,\!7) (7,\!\mathrm{else},\!8) (8,\!S,\!9)\big)
\end{align*}}%
in the automaton $\?A_{G,w}$.  Such disambiguation rules are easy to
write, but they are also inherently \emph{dynamic}: they cannot be
compiled into the grammar, because whether the rule will be triggered
depends on whether an ambiguity appears there---an
undecidable problem.

\section{Complexity Results}\label{sec-cmpl}
We investigate in this section the complexity of the parse forest
model-checking problem.  We obtain a classification of complexities
depending on the properties of the grammar (see \autoref{fig-cmpl}).
Interestingly, our hardness results always hold for a formula
$\varphi$ in the rather restricted fragment $\PDLcr[\downarrow]$, and
generally hold already for fixed $G$ and/or $w$.  These
bounds use logarithmic space reductions.%

Turning first to the complexity in the general case, an immediate
consequence of classical results in the
field~\citep[e.g.][Theorem~7]{calvanese09} is that it lies in
\textsc{ExpTime}.
\begin{figure}[tbp]
  \centering
  \begin{tikzpicture}[semithick,shorten >=1pt,every
      node/.style={font=\small,text height=1.5ex,text depth=.25ex}]
  \node(g) at (6.3,1.6){general case};
  \node(ae) at (6.3,0){acyclic, $\varepsilon$-free};
  \node(a) at (4.3,.8) {acyclic};
  \node(e) at (8.3,.8) {$\varepsilon$-free};
  \path
    (g) edge (a)
    (a) edge (ae)
    (g) edge (e)
    (e) edge (ae);
  \path[dashed,thin,color=black!50]
    (0,.4) edge (8.7,.4)
    (0,1.2) edge (8.7,1.2);
  \node[color=black!70,text ragged,text width=3cm] at (1.5,1.6) {\textsc{ExpTime}-complete};
  \node[color=black!70,text ragged,text width=3cm] at (1.5,.8) {\textsc{PSpace}-complete};
  \node[color=black!70,text ragged,text width=3cm] at (1.5,0) {\textsc{NPTime}-complete};
  \end{tikzpicture}
  \caption{\label{fig-cmpl}The complexity of the PFMC problem, depending on the grammar characteristics.}
\end{figure}
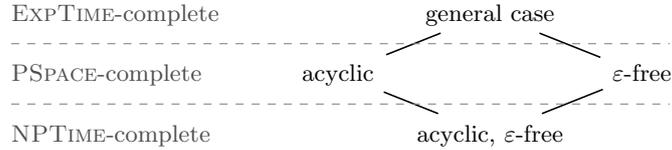
\begin{proposition}\label{prop-ete}
  PFMC is in \textsc{ExpTime}.
\end{proposition}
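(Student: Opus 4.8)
The plan is to settle the question with the standard automata-theoretic machinery for \PDL\ on trees (equivalently, Regular XPath), by reducing PFMC to a tree-automata emptiness test. Recall that the parse forest $L_{G,w}$ is recognized by the nondeterministic finite tree automaton $\mathcal{A}_{G,w}$, whose size is polynomial in $|G|$ and $|w|$ and which is constructible in logarithmic space. It therefore suffices to build an automaton for the models of $\varphi$, to intersect it with $\mathcal{A}_{G,w}$, and to test the result for emptiness.

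First, I would translate $\varphi$ into a \emph{two-way alternating tree automaton} $\mathcal{B}_\varphi$ of size polynomial in $|\varphi|$ recognizing exactly $\sem{\varphi}$ among finite ordered trees; restricting to binary tree models is harmless here by Example~\ref{ex-nav}. This translation is by now classical: the states are drawn from the Fischer--Ladner closure of $\varphi$, the regular path expressions (in particular the Kleene stars $\pi^\ast$) are absorbed into the transition structure, and the converse atomic paths $\uparrow$ and $\leftarrow$ are realised by the two-way moves of the automaton; this is exactly the content of Theorem~7 of \citet{calvanese09}.

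Next, since a nondeterministic tree automaton is a special case of an alternating one, and intersection of alternating automata is obtained by a conjunctive initial transition, I would form a two-way alternating tree automaton $\mathcal{C}$ of size polynomial in $|G|$, $|w|$, and $|\varphi|$ recognizing $L_{G,w}\cap\sem{\varphi}$. The given PFMC instance is then positive precisely when $\mathcal{C}$ accepts some tree, i.e.\ when $L(\mathcal{C})\neq\emptyset$; finiteness of the models is guaranteed for free, since $\mathcal{A}_{G,w}$ only accepts finite trees.

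Finally, I would invoke the fact that emptiness of two-way alternating tree automata is decidable: eliminating alternation and two-wayness produces a one-way nondeterministic tree automaton at the cost of a single exponential blow-up, after which emptiness is testable in polynomial time. As $|\mathcal{C}|$ is polynomial in the size of the input, the whole procedure runs in exponential time. The only genuinely delicate point—and hence the step I expect to require the most care—is the faithful treatment of the converse modalities together with the unbounded path expressions, which is precisely what forces the detour through two-way alternating automata rather than a direct bottom-up construction; but since all of this is available off the shelf, the argument reduces to assembling these pieces.
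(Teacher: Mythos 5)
Your proof is correct, and it follows the same automata-theoretic route as the second half of the paper's own proof (the paper even points to the same result of \citet{calvanese09} for this purpose); the paper additionally gives a first, purely logical argument---localize $\mathcal{A}_{G,w}$, re-express it as a \PDL\ formula $\varphi_{G,w}$ as in Example~\ref{ex-cfg}, relabel $\varphi$, and invoke the \textsc{ExpTime} bound on \PDL\ satisfiability---which you bypass entirely. The one substantive difference is the order of operations. You intersect at the level of polynomial-size two-way alternating automata and pay the single exponential only at the end, when eliminating alternation and two-wayness from the product $\mathcal{C}$; that exponential therefore applies to the whole product, including its $|G|\cdot|w|^3$ part. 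The paper instead first converts $\varphi$ alone into a one-way nondeterministic tree automaton $\mathcal{A}_\varphi$ of size $2^{p(|\varphi|)}$, then takes the product with the polynomial-size $\mathcal{A}_{G,w}$ and tests emptiness in time linear in $|G|\cdot|w|^3\cdot 2^{p(|\varphi|)}$, thereby confining the blow-up to the formula. Both orderings establish the proposition as stated, but the paper's ordering also yields the consequence it draws immediately afterwards---that PFMC is \textsc{PTime}-complete when $\varphi$ is fixed---which your version as written does not: for fixed $\varphi$ your procedure still runs in time exponential in $|G|\cdot|w|^3$. To recover that refinement within your argument, eliminate alternation and two-wayness from $\mathcal{B}_\varphi$ \emph{before} intersecting with $\mathcal{A}_{G,w}$.
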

\begin{proof}
  One way to proceed is to \emph{localize} the automaton
  $\mathcal{A}_{G,w}$ by replacing each rule $(i,X,j)\to X(\alpha)$ by
  $(i,X,j)\to (i,X,j)(\alpha)$.  We can then apply the construction of
  Example~\ref{ex-cfg} to the resulting local automaton, thereby
  obtaining a \PDL\ formula $\varphi_{G,w}$ describing a relabeled
  parse forest of $w$ according to $G$.  It then suffices to apply the
  same relabeling to $\varphi$ by interpreting each atomic proposition
  $X$ as $\bigvee_{1\leq i\leq j\leq n}(i,X,j)$, yielding $\varphi'$,
  and to use the \textsc{ExpTime} upper bound on \PDL\ satisfiability
  of \citet{pdltree} to the conjunction $\varphi'\wedge\varphi_{G,w}$
  to conclude.

  An issue with this proof is that it yields an exponential complexity
  even if $\varphi$ is fixed.  We can improve on this using
  automata-based techniques: assume $G$ to be in quadratic form and
  $\varphi$ to work on binary trees that encode unranked trees with
  the $\swarrow$ and $\rightarrow$ relations from
  Example~\ref{ex-nav}, as these transformations only incur a linear
  cost.  Then, construct the tree automaton $\mathcal{A}_{G,w}$ of
  size $O(|G|\cdot|w|^3)$ that recognizes the set of parse trees of
  $w$ in $G$ and the tree automaton $\mathcal{A}_\varphi$ of size
  $2^{p(|\varphi|)}$ for a polynomial $p$ that recognizes the models
  of $\varphi$: it suffices to test the emptiness of their product
  automaton, which can be performed in time linear in
  $|G|\cdot|w|^3\cdot 2^{p(|\varphi|)}$ for a polynomial $p$.
\end{proof}
An interesting consequence of the proof of Prop.~\ref{prop-ete}
is that the PFMC problem is \textsc{PTime}-complete when the \PDL\
formula is fixed, pleading for using small formul\ae\ in practice.

Our proof for Prop.~\ref{prop-ete} does not benefit from the
specificities of the PFMC problem: any satisfiability problem in
presence of a tree language would use the same algorithm.  Therefore,
we might still hope for the existence of a more efficient solution,
but adapting the proof of \textsc{ExpTime}-hardness for PDL
satisfiability from \citep{blackburn01}, we obtain:
\begin{proposition}
  PFMC is \textsc{ExpTime}-hard, even for fixed $G$ and $w$ and
  for $\varphi$ in $\PDLcr[\downarrow]$.
\end{proposition}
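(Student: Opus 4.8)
The plan is to reduce from the satisfiability problem for $\PDLcr[\downarrow]$ over finite ordered trees, which I expect to be \textsc{ExpTime}-hard by a direct adaptation of the classical Fischer--Ladner argument recast in \citep{blackburn01}. The crucial observation for obtaining hardness with \emph{fixed} $G$ and $w$ is that the entire computational content of the reduction can be pushed into the formula $\varphi$, while $G$ and $w$ merely supply a sufficiently \emph{universal} space of candidate models. Concretely, I would exhibit a single grammar $G$ whose parse forest $L_{G,\varepsilon}$, taking the fixed word $w=\varepsilon$, consists of \emph{all} finite trees over a constant label palette, so that the PFMC instance $(G,\varepsilon,\varphi)$ asks precisely whether $\varphi$ is satisfiable over finite trees.

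To build such a universal $G$, I exploit that no acyclicity or $\varepsilon$-freeness restriction is imposed here. Fix a constant set of internal-node colors $C_1,\dots,C_r$, take $\Sigma=\emptyset$ and $w=\varepsilon$, and include for every color the productions $C_i\to\varepsilon$ and $C_i\to C_j\,C_k$ for all $j,k$, together with $S\to C_i$ from the axiom. Since every leaf is labelled $\varepsilon$, every derived tree has empty yield, so $L_{G,\varepsilon}$ contains exactly the finite binary trees whose internal nodes carry an arbitrary color from the fixed palette (the cyclic productions $C_i\to C_j\,C_k$ being what make the forest infinite). This forest does not depend on the instance, and it is `universal' in the sense that it realizes every colored tree shape the satisfiability reduction may require; by the relabeling of Example~\ref{ex-cfg}, the colors $C_i$ are genuine atomic propositions of $V'=\#{AP}$ on which $\varphi$ may test.

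With $G$ and $w$ fixed in this way, the PFMC question ``does there exist $t\in L_{G,\varepsilon}$ with $t\models\varphi$?'' coincides with the satisfiability of $\varphi$ over finite colored trees, so any \textsc{ExpTime}-hard such satisfiability problem reduces to PFMC in logarithmic space, the instance being entirely the formula $\varphi$. It then remains to verify that the satisfiability hardness survives the restrictions claimed in the statement: $\varphi$ may use only the downward atomic path $\downarrow$, only core programs (so $\downarrow^\ast$ rather than conditional iteration), and only the constant, mutually exclusive label set furnished by $G$. Over trees, $\downarrow$ plays the role of the single PDL program and $\downarrow^\ast$ of its iteration, so the encoding of an alternating polynomial-space computation goes through with downward navigation alone.

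The main obstacle is precisely this last point. The textbook \textsc{ExpTime}-hardness of PDL satisfiability uses Kripke models in which several atomic propositions hold simultaneously at a state and the program is freely interpreted, whereas here each node carries a \emph{single} label from a fixed finite palette and the program is forced to be the child relation of a tree. I expect to resolve this by (i)~observing that the reduction needs only a \emph{bounded} number of propositions, encoding a bounded valuation as one color out of the constant palette so that mutual exclusivity is harmless, and (ii)~representing each logical state by a small fixed gadget subtree navigated by $\downarrow$ and $\downarrow^\ast$, so that the exponential-size model forced by $\varphi$ is realized as a tree in $L_{G,\varepsilon}$. Checking that this gadget encoding stays within $\PDLcr[\downarrow]$ and that the forced models indeed lie in the universal forest is the delicate part of the argument.
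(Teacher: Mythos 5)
There is a genuine gap: your reduction's source problem is not an available result, and all of the actual difficulty of the proposition is hidden inside it. You reduce from satisfiability of $\PDLcr[\downarrow]$ over finite trees, which you merely ``expect'' to be \textsc{ExpTime}-hard by adapting the classical argument. But the known \textsc{ExpTime}-hardness (the result cited in the paper for \PDL\ on trees) concerns the \emph{full} logic, not the core downward fragment, and it is proved over Kripke-style models where polynomially many propositions hold simultaneously at a state: the Fischer--Ladner/Blackburn-et-al.\ reduction stores an entire corridor row (or \textsc{PSpace} tape) as a polynomial-size valuation at each world. Your claim~(i), that ``the reduction needs only a bounded number of propositions,'' is therefore false; in the single-label tree semantics used here (each node carries exactly one proposition, and labels are mutually exclusive) a configuration cannot live at one node and must be spread over gadget subtrees---chains and combs---navigated by downward paths, with cross-configuration constraints relating gadgets of a position to gadgets of its successors. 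That gadget construction, together with formulas enforcing that the gadgets are well-formed (e.g.\ that a position node does not carry two contradictory tile chains---note that plain uniqueness of children is not even expressible, since $\PDLcr[\downarrow]$ is bisimulation-invariant, so the well-formedness conditions have to be engineered as exclusions of conflicting chain lengths), \emph{is} the proof; deferring it as ``the delicate part'' leaves the proposition unproven. There is also a secondary soundness issue you gloss over: the parse forest of your universal grammar consists of trees whose leaves are all labeled $\varepsilon$, so ``$t\in L_{G,\varepsilon}$ with $t\models\varphi$'' does not literally coincide with satisfiability of $\varphi$ over colored trees; $\varphi$ must be relativized to ignore the $\varepsilon$-leaves. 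This is exactly the obstacle the authors flag in their footnote when they explain why a reduction from $\PDL[{\downarrow}]$ satisfiability via a universal CFG was abandoned.

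For comparison, the paper's proof inverts your division of labor: instead of making the grammar maximally permissive and pushing everything into the formula, it reduces \emph{directly} from the two-player corridor tiling game of Chlebus, and uses the \emph{fixed} grammar to enforce the skeleton of game trees (each turn node has exactly one move list, one winner flag, one player flag, one column chain, one tile chain, one corridor comb, one counter chain), so that the $\PDLcr[\downarrow]$ formula only has to check the game mechanics and the existence of a winning strategy. Letting the grammar impose the structural uniqueness that the bisimulation-invariant downward fragment cannot express on its own is precisely what makes the construction go through; a correct completion of your plan would end up re-proving this same construction, with additional work rather than less.
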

\begin{proof}[Proof Idea]
  We go back to low-level arguments\footnote{ One could attempt to
    reduce from the satisfiability problem for
    ${\PDL[{\downarrow}]}$---which is
    \textsc{ExpTime}-complete~\citep{pdltree}---, but it seems to us
    that such a proof would require changing the structure of the
    satisfaction witnesses by adding $\varepsilon$-leaves, and we do
    not see any straightforward way of handling this modification in
    the formula.}  and reduce from the \emph{two-players corridor
    tiling game} of \citet{chlebus86}.  We fix $w=\varepsilon$ and
  also fix $G$ to generate a parse forest encoding game trees; we use
  a $\PDLcr[\downarrow]$ formula $\varphi$ to check that there exists
  a winning strategy.  See \appref{ax-gal} for details.
\end{proof}
As can be seen from this proof idea, the fact that $w=\varepsilon$ and
$G$ is cyclic plays an important role, because the parse forest is
essentially unconstrained.  This is a good incentive to examine what
happens when $G$ is acyclic and/or $\varepsilon$-free, especially
since those cases are most relevant for the applications we described
in \autoref{sec-mcpf}.

\subsection{The Acyclic {\boldmath $\varepsilon$}-free Case: Mixed
 Model-Theoretic Syntax}
Let us therefore consider the other end of our spectrum, which we
claimed was of particular relevance for the mixed approach to
model-theoretic syntax we presented in \autoref{sec-mts}: if $G$ is
acyclic and $\varepsilon$-free, then $\?A_{G,w}$ is a non-recursive
tree automaton generating a \emph{finite} parse forest, albeit it might
contain exponentially many trees.  This yields an \textsc{ExpTime}
algorithm that performs \PDL\ model-checking (in
\textsc{Ptime}~\citep{lange06}) on each tree individually.  We can try
to refine this first approach and resort
to \citep[Lemma~7.5]{benedikt08}, which entails that the problem for
the \PDLcr\ fragment is in \textsc{PSpace}, but we can do better:
\begin{proposition}\label{prop-aef}
  PFMC with acyclic and $\varepsilon$-free grammars is
  \textsc{NPTime}-complete; hardness holds even
  for fixed $G$ and for $\varphi$ in $\PDLcr[\downarrow]$.
\end{proposition}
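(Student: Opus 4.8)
The plan is to guess a single tree $t\in L_{G,w}$ and verify $t\models\varphi$ in polynomial time. Since $G$ is acyclic and $\varepsilon$-free, every parse tree of $w$ has height bounded by $|N|$ (no nonterminal can appear twice along a root-to-leaf path without giving $A\Rightarrow^+A$), and every internal node spans a nonempty factor of $w$, so the total number of nodes is polynomial in $|w|$ and $|G|$. Concretely, I would nondeterministically build an accepting run of $\?A_{G,w}$ top-down: starting from the initial state $(0,S,n)$, at each state $(i,A,j)$ guess a production $A\to X_1\cdots X_m\in P$ together with split points $i=i_0\le i_1\le\cdots\le i_m=j$ consistent with the terminal rules, producing children $(i_0,X_1,i_1),\dots,(i_{m-1},X_m,i_m)$. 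This is exactly a derivation of $w$ according to $G$, and by the span/height bounds it has polynomial size and can be guessed in \textsc{NPTime}. Having materialized $t$, I invoke the \textsc{PTime} \PDL\ model-checking algorithm of \citet{lange06} to decide $t\models\varphi$, and accept iff it succeeds. Correctness is immediate from the Bar-Hillel characterization of $L_{G,w}$ as the trees recognized by $\?A_{G,w}$: some $t\in L_{G,w}$ satisfies $\varphi$ iff some guessed run yields a tree passing the model-check.

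\textbf{NPTime-hardness.}
For the lower bound I would reduce from a standard \textsc{NPTime}-complete problem such as \textsc{SAT} (or \textsc{3SAT}), keeping $G$ and the formula fragment fixed as required. The idea is to let the parse forest $L_{G,w}$ encode the $2^k$ truth assignments to the $k$ variables of the input formula: a fixed acyclic, $\varepsilon$-free grammar generates, for a word $w$ whose length is determined by the instance, a binary-branching structure in which each of the $k$ variable positions is nondeterministically labeled \emph{true} or \emph{false} via a binary choice of production. Thus choosing a particular $t\in L_{G,w}$ corresponds exactly to choosing an assignment. The clause structure of the instance is then encoded in $w$ (and hence into the relabeled node positions), and a $\PDLcr[\downarrow]$ formula $\varphi$ checks, by navigating downward through the assignment block and the clause block, that the guessed assignment satisfies every clause. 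Since \PDLcr[\downarrow]\ permits only downward diamonds and Boolean combinations, the main care is to route the satisfaction check purely with $\downarrow$ and $\downarrow^\ast$; encoding each clause as a downward disjunction over its literals—each literal test comparing the assignment label at the corresponding variable node with the required polarity—suffices.

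\textbf{Main obstacle.}
The routine upper-bound direction is unproblematic once the polynomial height/size bound is observed; the delicate part is the hardness reduction under the twin restrictions that $G$ be \emph{fixed} and $\varphi$ lie in the weak fragment $\PDLcr[\downarrow]$. Fixing $G$ forces all instance-specific information—both the number of variables and the clause list—into the word $w$ alone, so the grammar can only provide a generic ``assignment-generating skeleton'' while $w$ must carry the combinatorial content that $\varphi$ reads off. Designing the encoding so that a single downward-only, star-but-not-conditional formula can correlate each clause's literals with the far-away assignment choices, without the convenience of sibling or upward navigation, is where the real engineering lies; I would likely arrange the tree so that each variable's chosen value is \emph{broadcast} down to every clause occurrence of that variable (by repeating the assignment choice in the grammar-generated subtree above each literal position), thereby making all comparisons purely local along a downward path.
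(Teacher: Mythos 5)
Your overall plan coincides with the paper's on both halves (guess a single tree of $L_{G,w}$ and model-check it in polynomial time; reduce from 3SAT with a fixed grammar for hardness), but each half contains a genuine gap.

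In the upper bound, your polynomial size bound rests on a false claim: acyclicity does \emph{not} bound the height of parse trees by $|N|$. As defined in the paper, acyclicity only forbids $A\Rightarrow^+ A$, i.e.\ deriving the one-symbol string $A$ from itself; it does not forbid recursion $A\Rightarrow^+\beta A\gamma$ with $\beta\gamma\neq\varepsilon$. The grammar with productions $S\to a\,S\mid a$ is acyclic and $\varepsilon$-free, yet its parse tree for $a^n$ has height $n$ while $|N|=1$; so a nonterminal \emph{can} repeat along a root-to-leaf path, and your parenthetical justification conflates ``acyclic'' with ``non-recursive''. A repetition contradicts acyclicity only when the two occurrences span the same factor of $w$, which in an $\varepsilon$-free grammar happens exactly when they are linked by a chain of unit productions. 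That is how the paper actually argues: in the run of $\mathcal{A}_{G,w}$, the nodes carrying a given pair of positions $(i,j)$ form a connected chain (an lca argument using $i<j$, hence using $\varepsilon$-freeness), acyclicity bounds each such chain by $|N|$, and collapsing the chains leaves a tree with exactly $|w|$ leaves in which every internal node has at least two children, hence at most $|w|-1$ internal nodes; this yields the bound $|N|(|w|-1)+|w|$ on the number of nodes. Your nondeterministic guess-and-check algorithm (with model checking in \textsc{PTime}~\citep{lange06}) is fine once this correct bound is in place, but as written the key lemma is unsupported.

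In the lower bound, you impose a constraint that is not in the statement: only $G$ must be fixed, so the formula $\varphi$ (like $w$) may depend on the 3SAT instance, and the clause list does not have to be squeezed into $w$. The paper exploits exactly this: $w\eqdef a^n$ for $n$ variables, the fixed grammar $S\to S\,F\mid S\,T\mid F\mid T$, $F\to a$, $T\to a$ makes $L_{G,w}$ a set of combs in bijection with the $2^n$ assignments, and each clause $C_i$ becomes a disjunction of three downward paths $\varphi_i\eqdef\bigvee_{j=1}^3\tup{(S?;\downarrow)^{k_j}}\beta_{i,j}$ with $\beta_{i,j}\in\{T,F\}$ the required polarity, all in $\PDLcr[\downarrow]$. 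By contrast, your ``broadcast'' encoding, which repeats each variable's value in the subtree above every literal occurrence, runs into a real obstruction: a fixed CFG cannot enforce agreement between the repeated copies of one variable's value across distant subtrees (this is precisely the kind of long-distance dependency a CFG cannot express), so consistency would itself have to be checked by $\varphi$, and you give no construction for that. Once you allow $\varphi$ to carry the clauses, none of that machinery is needed.
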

\begin{proof}[Proof of the Upper Bound]
  We show that the parse trees in $L_{G,w}=L(\?A_{G,w})$ are of polynomial
  size in $|G|$ and $|w|$.  The \textsc{NPTime} algorithm then guesses a tree in
  $L_{G,w}$  and checks that it is a model in polynomial
  time~\citep{lange06}.

  \begin{claim}
    Let $G =\tup{N,\Sigma,P,S}$ be an acyclic and $\varepsilon$-free
    CFG.  Let $w\in\Sigma^\ast$.  Any parse tree $t$ in $L_{G,w}$ has
    at most $|N|(|w|-1) + |w|$ nodes.
  \end{claim}
  Consider the run of $A_{G,w}$ on $t$: each node of $t$ is labeled by
  a state $(i,A,j)$ describing two positions $0\leq i\leq j\leq n$ in
  $w$ and a nonterminal $A$ in $N$.  Because $G$ is
  $\varepsilon$-free, we know that
  $i<j$.  %
  We claim that the set of nodes labelled with positions $(i,j)$ forms
  a connected chain.

  To see this, suppose two nodes $a$ and $b$ are both labelled with
  positions $(i,j)$.  Suppose first that neither $a$ nor $b$ is an
  ancestor of the other.  Let then $c\not\in\{a,b\}$ be their least
  common ancestor (lca): $c$ must have at least two children, and its
  children will be labelled with non-overlapping positions---recall
  that $i<j$.  Only one of these non-overlapping intervals can contain
  the interval $(i,j)$.  The child corresponding to that interval
  would then be the lca of $a$ and $b$, in contradiction with $c$
  being their lca: hence one of $a$ or $b$ is the lca of $a$ and $b$.

  Suppose now without loss of generality that $a$ is an ancestor of
  $b$.  Observe that a descendant of $a$ would be labelled with a
  sub-interval of $(i,j)$, and an ancestor of $b$ would be labelled
  with a super-interval of $(i,j)$.  This forces every node in the
  path from $a$ to $b$ to be labelled with $(i,j)$.  Hence, the nodes
  labelled with $(i,j)$ form a connected chain.

  Since $G$ is acyclic, each chain of nodes
  $(i,A_1,j),(i,A_2,j)\cdots(i,A_p,j)$ having the same positions
  $(i,j)$ cannot have a non-terminal $A_k$ occuring twice, or the
  grammar would allow a cycle.  Therefore, each such chain will have
  at most $|N|$ nodes.  We can `collapse' these chains to form a
  tree where each $(i,j)$ pair appears at most once, and every node
  (except the leaves) has at least two children.  Since there are
  exactly $|w|$ leaves ($G$ is $\varepsilon$-free), there can be at
  most $|w|-1$ internal nodes in such a tree.  We obtain that there
  were at most $|N|(|w|-1)$ internal nodes in the original parse tree,
  i.e.\ at most $|N|(|w|-1) + |w|$ nodes in the full parse tree.
\end{proof}
\begin{proof}[Proof Idea for the Lower Bound]
  We reduce from 3SAT with a fixed grammar $G$ and a
  \mbox{$\PDLcr[\downarrow]$} formula $\varphi$; see
  \appref{ax-nphard} for details.
\end{proof}

\subsection{Non-Recursive DTDs}

Let us turn now to the more involved cases where $G$ is either acyclic
or $\varepsilon$-free: we rely in both cases for the upper bounds on
the same result that extends Lemma~7.5 of \citet{benedikt08} to handle
\PDL\ instead of \PDLcr:
\begin{proposition}\label{prop-dtd}
  Satisfiability of \PDL\ in presence of a non-recursive DTD is
  \textsc{PSpace}-complete.
\end{proposition}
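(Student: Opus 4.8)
The plan is to treat the two bounds separately; the upper bound is where the real work lies.

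For \emph{hardness}, I would note that $\PDLcr\subseteq\PDL$, so it suffices to exhibit a $\PDLcr$ lower bound already for non-recursive DTDs, which is essentially the situation treated by \citet{benedikt08}. Concretely, one reduces from QBF: given $\exists x_1\forall x_2\cdots Q x_n\,\psi$, build trees of fixed depth $2n$ in which the level-$k$ nodes choose a truth value for the $k$-th variable, existential levels being constrained to a single child and universal levels to both children. The branching discipline and the distinct labelling of each level are imposed by a DTD that is non-recursive precisely because the depth is bounded by $2n$; a $\PDLcr[\downarrow]$ formula then checks, along every root-to-leaf path, that the recorded assignment satisfies the matrix $\psi$. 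A witnessing tree exists iff the QBF is true, giving \textsc{PSpace}-hardness even in the restricted fragment.

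For the \emph{upper bound}, first I would compile each path subformula $\pi$ of $\varphi$ into an NFA $A_\pi$ over the alphabet of atomic moves $\{\downarrow,\uparrow,\rightarrow,\leftarrow\}$ interspersed with tests $\chi?$. The total size of these automata is polynomial, and this is exactly the step that lifts Lemma~7.5 of \citet{benedikt08} from $\PDLcr$---where the only stars are atomic $\alpha^\ast$, i.e.\ one-state loops---to full $\PDL$, where $\pi^\ast$ and nested tests force a genuine NFA. Using the $A_\pi$ I would build the standard two-way alternating tree automaton $B_\varphi$ for $\varphi$ \citep{pdltree}, of size polynomial in $|\varphi|$, together with a bottom-up tree automaton $A_D$ for the DTD language, of size polynomial in $D$; their product $B$ is again a polynomial-size two-way alternating automaton. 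Since $D$ is non-recursive, no label repeats on any branch, so every tree it generates has depth at most $|D|$.

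It then remains to decide non-emptiness of $B$ \emph{restricted to trees of depth $\le|D|$} in polynomial space. The key device is a \emph{summary}: to each subtree I associate the relation recording which entry states of $B$ (reached from the parent) can be discharged into which exit states returning to the parent, together with the fulfilment status of the $\tup{\pi^\ast}$-eventualities inside that subtree. A summary is a relation over the polynomially many states of $B$, hence a polynomial-size object, and summaries compose along the children of a node by a left-to-right sweep that simultaneously reads the children through the content-model automaton of $D$ and through the horizontal ($\rightarrow,\leftarrow$) part of $B$; the leftward moves are absorbed by applying the same summary idea to the \emph{word} of children, keeping the sweep state polynomial. The algorithm searches depth-first for a witnessing tree, guessing at each node its label and, child by child, the subtree summaries, and keeping on its stack only the current root-to-node path together with the partial sweep states along it. By the depth bound this stack is polynomial; an alternating formulation---existentially guessing labels, moves and summaries, universally branching on the conjuncts of $B$ and on the children---runs in polynomial time, so $\textsc{APTime}=\textsc{PSpace}$ yields the bound.

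\textbf{The main obstacle} will be the correctness of summary composition for a two-way \emph{alternating} automaton carrying regular-path eventualities: I must show that a globally consistent model exists iff the local summaries compose to an accepting root summary, and, crucially, that every $\tup{\pi^\ast}$-eventuality guessed along the way is fulfilled by an \emph{actually terminating} path rather than deferred forever. This well-foundedness condition, trivial for the atomic stars of $\PDLcr$, must now be threaded through both the vertical recursion and the horizontal sweep via the NFA positions of the $A_\pi$, while keeping every summary a polynomial-size relation.
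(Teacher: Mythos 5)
Your hardness direction is fine: it is in substance the QBF reduction behind Proposition~5.1 of \citet{benedikt08} (which the paper simply cites), and your formula can indeed be kept in $\PDLcr[\downarrow]$.

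The upper bound, however, has a genuine gap, and it sits exactly where you flagged ``the main obstacle''---but it is not a bookkeeping issue to be threaded through; it is the crux, and the claim as you state it is false. Your \textsc{PSpace} budget rests on the assertion that the interface of a subtree with respect to the two-way \emph{alternating} automaton $B$ is ``a relation over the polynomially many states of $B$''. That is correct for two-way \emph{nondeterministic} automata (Shepherdson-style crossing relations, a subset of $Q\times Q$), but not for alternating ones: a single copy entering a subtree may spawn, through conjunctive transitions, several copies that exit back to the parent, and these exits are \emph{joint} obligations that must all be discharged by one and the same outer context. The correct summary therefore maps each entry state to a family of exit \emph{sets} (an antichain in $2^{Q}$), an object of worst-case exponential size; projecting it to pairs $(q,q')$ forgets which exits must be fulfilled simultaneously and makes your composition step unsound. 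This is precisely why the standard translations of two-way alternating automata into one-way nondeterministic ones incur an exponential blow-up, and the same problem recurs in your horizontal sweep, since the children word has unbounded length and the $\{\rightarrow,\leftarrow\}$-part of $B$ is again two-way and alternating there.

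Nor does the alternating (\textsc{APTime}) formulation rescue the argument: once the algorithm branches universally on the conjuncts of $B$, the existential guesses of labels, moves and summaries made in distinct universal branches are independent, so the ``witnessing tree'' need not be consistent across branches and the procedure is unsound; restoring consistency forces the entire interface of a node to be fixed before branching, which is again the exponential object. The paper sidesteps all of this by never composing summaries on trees: it linearizes the depth-bounded trees into XML strings (opening/closing tags), compiles both $D$ and $\varphi$---including the parity bookkeeping needed for general $\pi^{\ast}$ and negation---into polynomial-size two-way alternating parity \emph{word} automata, and then invokes the known \textsc{PSpace} bound for emptiness of such word automata \citep{serre06}. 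That cited result is exactly the algorithmic core your summaries would have to re-derive; your proof can be repaired by the same linearization, but that repair essentially is the paper's proof.
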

In this proposition, a \emph{document type definition} (DTD) is a
generalized CFG $D=\tup{N,P,S}$ where $P$ is a mapping from $N$
to \emph{content models} in $\mathrm{Reg}(N^\ast)$ the set of regular
languages over $N$---we will assume these content models
to be described by finite automata (NFA).  Given $D$, the derivation
relation $\Rightarrow$ relates $\beta A\gamma$ to $\beta\alpha\gamma$
iff $\alpha$ is in $P(A)$; a DTD is \emph{non-recursive} if no
nonterminal has a derivation $A\Rightarrow^+\beta A\gamma$ for some
$\beta,\gamma$ in $N^\ast$.  Note that a non-recursive DTD might still
generate an infinite tree language, but that all its trees will have a
depth bounded by $|N|$.
\begin{proof}[Proof Idea for Prop.~\ref{prop-dtd}]
  The hardness part is proven by \citet{benedikt08} in their
  Prop.~5.1.

  For the upper bound, we reduce to the emptiness problem of a 2-way
  alternating parity \emph{word} automaton, which is
  in \textsc{PSpace}~\citep{serre06}.  The key idea, found
  in \citeauthor{benedikt08}'s work, is to encode trees of bounded
  depth as XML strings (i.e.\ with opening and closing tags): both the
  DTD $D$ and the formula $\varphi$ can then be encoded as alternating
  parity word automata $\?A_D$ and $\?A_\varphi$ of polynomial size.
  Because we handle the full \PDL\ instead of only \PDLcr, our
  construction for $\?A_\varphi$ has to extend that
  of \citeauthor{benedikt08}---for instance, we cannot assume
  $\?A_\varphi$ to be loop-free.  See \appref{ax-dtd} for
  details.
\end{proof}

\subsection{Acyclic Case: Ambiguity Filtering}
We are now ready to attack the case of acyclic grammars.  This
restriction is enough to ensure that the parse forest is finite, and,
more importantly, $\?A_{G,w}$ is trivially non-recursive, thus
Prop.~\ref{prop-dtd} immediately yields a \textsc{PSpace} upper
bound.  In fact, this is optimal:
\begin{proposition}\label{prop-a}
  PFMC with acyclic grammars is \textsc{PSpace}-complete; hardness
  holds even for fixed $w$ and for $\varphi$ in $\PDLcr[\downarrow]$.
\end{proposition}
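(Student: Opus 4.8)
The plan is to treat the two bounds separately, reusing Prop.~\ref{prop-dtd} for membership and reducing from \textsc{TQBF} for hardness.

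For the \textsc{PSpace} upper bound, I would reduce the problem to satisfiability of \PDL\ in presence of a non-recursive DTD and invoke Prop.~\ref{prop-dtd}. Starting from the tree automaton $\mathcal{A}_{G,w}$ with state set $Q_{G,w}$, I localise it exactly as in the proof of Prop.~\ref{prop-ete}, turning every rule $(i,X,j)\to X(\alpha)$ into $(i,X,j)\to(i,X,j)(\alpha)$; viewing the states $Q_{G,w}$ as nonterminals and collecting, for each $(i,X,j)$, the finite set of admissible children sequences into one content model yields a DTD $D_{G,w}$ of size polynomial in $|G|$ and $|w|$ (assuming $G$ in quadratic form, each content model is a finite language of $O(|P|\cdot|w|)$ words, given by a small NFA). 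The key observation is that acyclicity of $G$ makes $D_{G,w}$ non-recursive: a state $(i,X,j)$ could only reappear strictly below an occurrence of itself if $X\Rightarrow^+X$ over the same span $a_{i+1}\cdots a_j$, which is precisely a cycle. Relabelling $\varphi$ into $\varphi'$ by replacing each atomic proposition $X$ with $\bigvee_{0\le i\le j\le n}(i,X,j)$ (a polynomial blow-up, again as in Prop.~\ref{prop-ete}), the trees of $D_{G,w}$ satisfying $\varphi'$ are exactly the relabellings of the parse trees of $L_{G,w}$ satisfying $\varphi$, so Prop.~\ref{prop-dtd} delivers the \textsc{PSpace} bound.

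For hardness I would reduce from \textsc{TQBF}, the truth of a quantified Boolean formula $\Phi=Q_1x_1\cdots Q_nx_n\,\psi$ with $\psi$ in CNF, fixing $w=\varepsilon$ throughout. The acyclic grammar $G$---and this is where $\varepsilon$-productions, forbidden in Prop.~\ref{prop-aef}, become essential---builds a quantifier tree of depth $O(n)$ with all leaves labelled $\varepsilon$: a level-$i$ nonterminal for an existential $x_i$ carries two productions, one recording ``$x_i$ true'' and one recording ``$x_i$ false'' before descending to level $i+1$, whereas a universal $x_i$ has a single production branching into both a ``true'' and a ``false'' subtree. Since a single parse tree makes the existential choices independently inside each universal subtree, the top-level existential ``does there exist $t$'' of the PFMC captures exactly the quantifier alternation of $\Phi$, while the branching at universal nodes produces the exponentially many (hence genuinely \textsc{PSpace}) root-to-leaf assignments that make the per-tree \textsc{PTime} model-checking of Prop.~\ref{prop-aef} inapplicable. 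The companion $\PDLcr[\downarrow]$ formula then forbids any falsified clause: writing $a_{j1},a_{j2},a_{j3}$ for the value-labels that falsify the three literals of clause $C_j$, listed in increasing level order, I take
\begin{align*}
\varphi&\eqdef\bigwedge_{j}\neg\tup{\downarrow^\ast}\big(a_{j1}\wedge\tup{\downarrow^\ast}(a_{j2}\wedge\tup{\downarrow^\ast}a_{j3})\big)\,,
\end{align*}
whose nested $\tup{\downarrow^\ast}$ modalities detect a single descending path meeting all three falsifying nodes, i.e.\ an assignment violating $C_j$; evaluated at the root, $\varphi$ holds iff every root-to-leaf branch satisfies $\psi$. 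Correctness then amounts to matching winning existential strategies with satisfying trees.

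The main obstacle I anticipate lies on the hardness side, in the bookkeeping that keeps $G$ acyclic while forcing both universal branches to be fully developed and the level-ordering of the labels $a_{jk}$ to be respected, so that the nested downward modalities genuinely read exactly one assignment per branch (and cannot stitch labels from sibling subtrees, since $\tup{\downarrow^\ast}$ only reaches descendants). I would relegate this gadget, together with the formal correctness argument, to the appendix. The upper bound, by contrast, is routine once the localisation and the non-recursiveness observation are in place.
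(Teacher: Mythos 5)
Your upper bound follows the same route as the paper (localize $\mathcal{A}_{G,w}$ into a DTD over the states $Q_{G,w}$, relabel $\varphi$, invoke Prop.~\ref{prop-dtd}), but your key claim---that acyclicity of $G$ alone makes the localized DTD non-recursive---is false as stated, because $\mathcal{A}_{G,w}$ is not trim. The rules of $\mathcal{A}_{G,w}$ only require $0\leq i_0\leq\cdots\leq i_m\leq n$, so a state can reappear below itself next to \emph{dead} sibling states spanning empty factors. Concretely, take $G$ with productions $A\to A\,B\mid a$ and $B\to b$ (acyclic, even $\varepsilon$-free) and $w=a$: the rule $(0,A,1)\to A\big((0,A,1)\,(1,B,1)\big)$ is present in $\mathcal{A}_{G,w}$, so your $D_{G,w}$ is recursive, yet $A\Rightarrow^+A$ does not hold in $G$. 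The point is that $(1,B,1)$ is a dead state ($B$ derives no tree with empty yield), so no grammar cycle can be extracted from this automaton cycle, and your ``same span implies $X\Rightarrow^+X$'' inference breaks down. This is precisely why the paper's proof works with the \emph{trimmed} automaton: trimness guarantees that every surviving state $(i,B,j)$ is productive, hence the leaves of a cycle context $C[(i,A,j)]$---which necessarily span empty factors---genuinely derive $\varepsilon$, and only then does an automaton cycle force $A\Rightarrow^+A$, a contradiction. Your argument is repaired by inserting this (polynomial-time) trimming step together with the productivity argument; without it, Prop.~\ref{prop-dtd} simply does not apply to the DTD you build.

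On the hardness side you genuinely diverge from the paper: the paper gets the lower bound by citing \citep[Prop.~5.1]{benedikt08} (\textsc{PSpace}-hardness of $\PDLcr[\downarrow]$ satisfiability in presence of non-recursive, no-star DTDs) and reducing that problem to PFMC by adding $\varepsilon$-leaves to the DTD, with $w=\varepsilon$ fixed; you instead re-prove hardness directly from \textsc{TQBF} with a strategy-tree grammar (existential variables as competing productions, universal variables as branching) and nested $\tup{\downarrow^\ast}$ formulas excluding falsified clauses. Your reduction looks sound: acyclicity, the fixed $w=\varepsilon$, the logspace bound and the $\PDLcr[\downarrow]$ restriction are all respected, and the clause formula cannot stitch labels across sibling subtrees since the three falsifying labels must lie on one descending path, hence on one branch encoding a single assignment. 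It is a legitimate, self-contained alternative, at the cost of redoing (and having to verify in an appendix) work that the paper obtains by citation.
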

\begin{proof}%
  Because $G$ is acyclic, for any $w$, the trimmed version of
  $\?A_{G,w}$ is a non-recursive tree automaton.  Indeed, in this
  automaton, if a state $(i_0,A,i_k)$ of $\?A_{G,w}$ rewrites in $n$
  steps into a tree $t$ with leaves labeled by
  $(i_0,X_1,i_1)\cdots(i_{k-1},X_k,i_k)$, then by induction on $n$,
  $A\Rightarrow^n X_1\cdots X_k$ in $G$.  If the automaton is trim,
  then the existence of a state $(i,B,j)$ implies that $B$ derives the
  factor $a_{i+1}\cdots a_{j}$ of $w$.  Thus, if $(i,A,j)$ were to
  rewrite in at least one step into a tree $C[(i,A,j)]$ in the trimmed
  $\?A_{G,w}$, then the tree $C[\varepsilon]$ has $\varepsilon$ as
  yield, and there would be a cycle $A\Rightarrow^+ A$ in $G$, a
  contradiction.  It remains to localize $\?A_{G,w}$ by relabelling
  the rules $(i,A,j)\to A(q_1\cdots q_m)$ as $(i,A,j)\to
  (i,A,j)(q_1\cdots q_m)$ to obtain a \emph{local} non-recursive tree
  automaton, which is just a particular case of a non-recursive DTD,
  and interpret the propositions $p$ in $V'=\#{AP}$ as $\bigvee_{0\leq
  i\leq j\leq n}(i,p,j)$ over $Q_{G,w}$ in $\varphi$ to apply
  Prop.~\ref{prop-dtd} and obtain the upper bound.

  The lower bound holds for the $\PDLcr[\downarrow]$ satisfiability
  problem in presence of non-recursive and no-star
  DTDs~\citep[Prop.~5.1]{benedikt08}, which is easy to reduce to our
  problem by simply adding $\varepsilon$-leaves in the DTD; this
  lower bound thus already holds for a fixed $w=\varepsilon$.
\end{proof}

\subsection{{$\varepsilon$}-Free Case: {\PDL\ }Recognition}
A key question if
model-theoretic syntax is to be used in practice for natural language
processing is the following \emph{recognition problem}:
\begin{pp}\hfill
\begin{description}
\item[input] a \PDL\ formula $\varphi$, a word $w$ in $\#{AP}^\ast$,
and a distinguished proposition $s$ in $\#{AP}$,
\item[question]\IEEEhspace{1em} does there exist a tree $t$ with yield $w$ and root
label $s$ s.t.\ $t\models\varphi$?\qed
\end{description}
\end{pp}
Note in particular that the statement of the problem excludes
$\varepsilon$-labeled leaves, which would require a different formulation and
would yield an \textsc{ExpTime}-complete problem.

The recognition problem motivates the last case of our study: Due to
cycles, an $\varepsilon$-free grammar $G$ can have infinitely many
parses for a given input string $w$, and its parse trees unbounded
depth.  Nevertheless, recursions in a parse forest of an
$\varepsilon$-free grammar display a particular shape: they
are \emph{chains} of unit rules $q_i\to A_i(q_{i+1})$.  The key idea
here is that such chains define regular languages of single-strand
branches, which can be encoded in a non-recursive DTD by `rotating'
them, i.e.\ seeing the chain as a siblings sequence instead of a
parents sequence, taking advantage of the DTD's ability to describe
trees of unbounded rank.  This hints at a reduction to
the \textsc{PSpace} algorithm of Prop.~\ref{prop-dtd}.

\subsubsection{PFMC in the $\varepsilon$-free Case}
We want to reduce the problem to the satisfiability
problem for $\PDL$ in presence of a non-recursive DTD and use
Prop.~\ref{prop-dtd}.  Our algorithm starts by constructing
$\?A_{G,w}$ in polynomial time on binarized trees.  As in the proof of
Prop.~\ref{prop-a}, we consider `localized' rules $q\to
q(q_1\,q_2)$ of $\?A_{G,w}$, and replace them by productions of the
form $q\to\#{chains}(q_1)\#{chains}(q_2)$ where the
$\#{chains}(q_i)$ are the languages of single chains out of $q_i$.
By suitably labeling our trees, we can interpret $\varphi$ over
those transformed trees.

\begin{proposition}\label{prop-efpe}
  PFMC with $\varepsilon$-free grammars is in \textsc{PSpace}.
\end{proposition}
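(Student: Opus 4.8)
The plan is to reduce the PFMC instance $(G,w,\varphi)$ in polynomial time to a question of satisfiability of \PDL\ in presence of a non-recursive DTD, and then conclude with Prop.~\ref{prop-dtd}. First I would build $\?A_{G,w}$ on binarized trees, so that every localized rule is binary ($q\to q(q_1\,q_2)$), a unit rule ($q\to q(q_1)$), or a leaf rule, at only a polynomial cost. Because $G$ is $\varepsilon$-free, every useful state $(i,X,j)$ has $i<j$, and a binary rule splits a span $(i,j)$ into $(i,k)$ and $(k,j)$ with $i<k<j$; hence each binary rule strictly shrinks the span width while unit rules preserve it. The only source of unboundedness is thus the unit rules, which---$G$ being possibly cyclic---form chains $q_0\to q_1\to\cdots\to q_\ell$ ending in a branching or leaf state, of \emph{a priori} unbounded length.

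The crucial step is to \emph{rotate} these chains. For each state $q$ I would compute $\#{chains}(q)$, the regular language of maximal unit-rule chains issuing from $q$; it is recognised by an NFA read off the unit-rule graph of $\?A_{G,w}$, hence of polynomial size. I then build a DTD $D$ whose content model for a branching state $q$ is the union over its binary rules $q\to q(q_1\,q_2)$ of $\#{chains}(q_1)\#{chains}(q_2)$. In a $D$-tree the chains of a parse tree become horizontal sibling sequences rather than vertical branches, so the depth of $D$-trees is bounded by the span width $|w|$ and $D$ is non-recursive---even though its content models, and its tree language, are infinite. Along the way I would annotate the alphabet with a bounded set of markers recording each node's role, namely $\#{branch}$, $\#{unit}$ and $\#{leaf}$ for the applied rule and $\#{rstart}$ for the first symbol of each right chain $\#{chains}(q_2)$; all of these are enforced inside the content-model NFAs.

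Finally I would translate $\varphi$ into a formula $\varphi'$ interpreted over $D$-trees. There is a label-preserving bijection $h$ between the nodes of a parse tree $t$ and those of its rotated image, under which the original child relation splits according to the source node's role, so I would set $\tau(\downarrow)\eqdef\#{unit}?;\rightarrow \;+\; \#{branch}?;\downarrow;\#{first}? \;+\; \#{branch}?;\downarrow;\#{rstart}?$ (a chain step becomes a right-sibling step, while a branching node reaches its left child as its first child and its right child as the $\#{rstart}$-marked child), and the original sibling relation becomes the horizontal path to that marked child, $\tau(\rightarrow)\eqdef\#{first}?;\rightarrow;(\neg\#{rstart}?;\rightarrow)^\ast;\#{rstart}?$. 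Extending $\tau$ homomorphically over the boolean connectives, over composition, union, star and inverse, and over the tests, and interpreting each proposition $X$ as $\bigvee_{0\leq i\leq j\leq n}(i,X,j)$ exactly as in the proof of Prop.~\ref{prop-a}, I would obtain a polynomial-size $\varphi'$ such that $t\models\varphi$ iff the rotated image satisfies $\tup{\downarrow;\#{first}?}\tau(\varphi)$ at its virtual root. As every step is polynomial and Prop.~\ref{prop-dtd} places the resulting satisfiability question in \textsc{PSpace}, the bound follows.

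I expect the main obstacle to be the correctness of this translation, that is, proving that $h$ carries $\sem{\downarrow}_t$ and $\sem{\rightarrow}_t$ exactly onto $\sem{\tau(\downarrow)}$ and $\sem{\tau(\rightarrow)}$ in the rotated tree. This calls for a careful case analysis over the node roles and the degenerate situations---chains of length one, leaf versus branching endpoints, and the topmost chain hanging from the virtual root---after which the induction on the structure of $\varphi$ goes through, since relational composition, union, reflexive-transitive closure and inversion all commute with the bijection $h$.
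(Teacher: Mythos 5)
Your proposal is correct and follows essentially the same route as the paper's own proof: binarize $G$, build $\?A_{G,w}$, rotate the unit-rule chains into sibling sequences described by regular content models $\#{chains}(q)$ so as to obtain a non-recursive DTD, reinterpret the atomic propositions and the $\downarrow$ and $\rightarrow$ relations over the rotated trees, and conclude with Prop.~\ref{prop-dtd}. The only differences are bookkeeping conventions: the paper marks rotated chain nodes with barred copies $\bar{q}$ and identifies original children via the test $[\leftarrow]\neg\#{bar}$, where you use role markers ($\#{unit}$, $\#{branch}$, $\#{rstart}$) and an explicit virtual root.
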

\begin{proof}
Let $G=\tup{N,\Sigma,P,S}$, $w$ be a string in $\Sigma^\ast$, and
$\varphi$ be a \PDL\ formula.  Without loss of generality, we assume
$G$ to have productions with right-parts of length at most 2; since
$G$ is $\varepsilon$-free, these right-parts have length at least
one.  We want to construct a non-recursive DTD $D=\tup{N',P',S'}$ and
a \PDL\ formula $\varphi'$ s.t.\ the parse forest model checking
problem on $G$, $w$, and $\varphi$ has a solution iff $\varphi$ is
satisfiable in presence of $D$, thereby reducing our instance to an
instance of a problem in \textsc{PSpace} by
Prop.~\ref{prop-dtd}.

We build $D$ from the polynomial-sized automaton
$\mathcal{A}_{G,w}$ by removing chains of \emph{unit} rules $q\to
A(q')$ of $\mathcal{A}_{G,w}$; recall that this automaton uses states
of form $q=(i,X,j)$ where $0\leq i<j\leq |w|$ and $X$ is in $V$.  Let
for this $\bar{Q}_{G,w}$ be a disjoint copy of $Q_{G,w}$ and define
$N'\eqdef\bar{Q}_{G,w}\uplus Q_{G,w}$.

\paragraph{Chain Sequences}
For each $q$ in $Q_{G,w}$, we consider the set of sequences
of successive states $q=q_0,q_1,\dots,q_n$ we can visit using only
unit rules $q_i\to A_i(q_{i+1})$ of $\delta_{G,w}$ and such that $q_n$
has a binary rule $q_n\to A_n(q'\,q'')$ or a nullary
rule $q_n\to a()$ in $\delta_{G,w}$.  More
precisely, we are interested in the relabeled sequence
$\bar{q}=\bar{q}_0,\bar{q}_1,\dots,\bar{q}_{n-1},q_n$ of copies
$\bar{q}_i$ of $q_i$, except on the very last position.  We call $\#{chains}(q)$
the language of such sequences.  Formally, $\#{chains}(q)$ is a regular language
over $N'$ that we can define thanks to a
NFA $\?A_q\eqdef\tup{N',N',\delta_q,\{\bar{q}\},Q_{G,w}}$ with state
space $N'$ where
\begin{align*}
  \delta_q&\eqdef\!\{(\bar{p},\bar{p},\bar{p}')\mid\exists A\in N,p\to
  A(p')\in\delta_{G,w}\}\\
  &\cup\{(\bar{p},p,p)\mid\exists A\in\! N,\exists p_1,p_2\!\in\!
  Q_{G,w},p\to A(p_1p_2)\!\in\!\delta_{G,w}\}\\
  &\cup\{(\bar{p},p,p)\mid\exists a\in\Sigma,p\to a()\in\delta_{G,w}\}\;.
\end{align*}
Note that $\?A_q$ has a size linear in that of $\?A_{G,w}$.  We can
see $\#{chains}$ as a regular substitution from $Q_{G,w}^\ast$ to ${N'}^\ast$
by setting $\#{chains}(\varepsilon)\eqdef\varepsilon$ and
$\#{chains}(uv)\eqdef\#{chains}(u)\#{chains}(v)$ for all $u,v$ in
$Q_{G,w}^\ast$.

\begin{figure}[tbp]
  \centering
  \subfloat[\label{fig-arun}A run of $\?A_{Q,w}$ on a tree in a
    parse forest of $w$ according to $G$.]{%
    \begin{tikzpicture}[semithick,shorten >=1pt,level/.style={level
      distance=.75cm,sibling distance=3cm},every
      node/.style={font=\small,text height=1.5ex,text depth=.25ex}]
    \node{$q_0,A_0$}
    child{node{$q_1,A_1$}
      child{node{$q_2,A_2$}
        child{node{$q_3,A_3$}
          child[sibling distance=1cm]{node{$q_4,a_1$}}
          child[sibling distance=1cm]{node{$q_5,a_2$}}}}}
    child{node{$q_6,A_6$}
      child{node{$q_7,A_7$}
        child[sibling distance=2cm]{node{$q_8,A_8$}
          child[sibling distance=1cm]{node{$q_9,a_3$}}
          child[sibling distance=1cm]{node{$q_{10},a_4$}}}
        child[sibling distance=2cm]{node{$q_{11},A_9$}
          child{node{$q_{12},a_5$}}}}};
    \end{tikzpicture}}\linebreak
  \subfloat[\label{fig-trrun}The corresponding tree in $L(D)$.]{%
    \begin{tikzpicture}[semithick,shorten >=1pt,level/.style={level
          distance=.75cm,sibling distance=1cm},every
        node/.style={font=\small,text height=1.5ex,text depth=.25ex}]
      \node{$q_0$}
      child{node{$\bar{q}_1$}}
      child{node{$\bar{q}_2$}}
      child{node{$q_3$}
        child[sibling distance=.5cm]{node{$q_4$}}
        child[sibling distance=.5cm]{node{$q_5$}}}
      child{node{$\bar{q}_6$}}
      child{node{$q_7$}
        child[sibling distance=.5cm]{node{$q_8$}
          child{node{$q_9$}}
          child{node{$q_{10}$}}}
        child[sibling distance=.5cm]{node{$\bar{q}_{11}$}}
        child[sibling distance=.5cm]{node{$q_{12}$}}};
  \end{tikzpicture}}
  \caption{\label{fig-todtd}The tree transformation for the proof of Prop.~\ref{prop-efpe}.}
\end{figure}
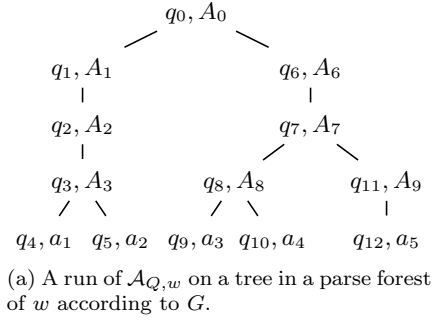
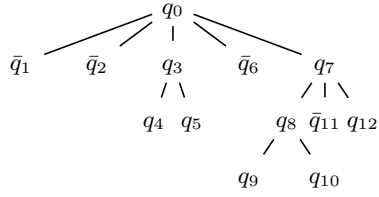
\paragraph{The DTD}  We can now express the productions $P'$ of $D$:
\begin{align*}
  P(\bar{q})&\eqdef\varepsilon&
  P(q)&\eqdef\bigcup_{q\to A(q_1\,
    q_2)\in\delta_{G,w}}\hspace{-1em}\#{chains}(q_1\,q_2)
\cup\bigcup_{q\to a()\in\delta_{G,w}}\hspace{-1em}\varepsilon\;.
\end{align*}
Thus, the symbols in $\bar{Q}_{G,w}$ are leaves and only
employed to represent a chain sequence that has been transformed into
a sequence of siblings in the DTD.  Also, because any word in
$\#{chains}(q)$ for some $q$ is of form $up$ with $u$ in
$\bar{Q}_{G,w}^\ast$ and $p$ in $Q_{G,w}$, any internal node in a tree
of $D$ has exactly two children labeled by states in $Q_{G,w}$.
Therefore, and because $G$ is $\varepsilon$-free, we get that $D$ is
non-recursive.  See \autoref{fig-todtd} for an illustration of the
tree transformation we operated.

\paragraph{The Formula}  It remains to define a formula
$\varphi'$ that will be interpreted on the transformed trees of $D$.
For this, we need to interpret the atomic propositions in $\#{AP}=V$
over the new set of labels $N'$, and to interpret the child
$\downarrow$ and sibling $\rightarrow$ relations.

Regarding the atomic propositions, we can interpret a label $X$ in
$V$ as
\begin{equation*}
  \bigvee_{0\leq i<j\leq |w|}(i,X,j)\vee\overline{(i,X,j)}
  \tag{interpretation of $X$}
\end{equation*}
over $N'$.

Regarding the relations, we first define
$\#{bar}\eqdef\bigvee_{\bar{q}\in\bar{Q}_{G,w}}\bar{q}$ to help us
differentiate between `rotated' nodes and preserved ones.  We then
interpret $\downarrow$ as a disjunction of paths depending on whether
we are on a rotated node, where the test $[\leftarrow]\neg\#{bar}$
allows to check that the current node is an `original' child:
\begin{gather*}
  (\neg\#{bar}?;\downarrow;[\leftarrow]\neg\#{bar}?) +
  (\#{bar}?;\rightarrow)\tag{interpretation of $\downarrow$}
\intertext{For $\rightarrow$, we set:}
  ([\leftarrow]\neg\#{bar}?);(\#{bar}?;\rightarrow)^\ast;\neg\#{bar}?;\rightarrow
  \;.\tag{interpretation of $\rightarrow$}
\end{gather*}
The initial test prevents the nodes taken from a chain from having a
right sibling; then the test sequence advances to the end of the chain
before we make the actual move to the original right sibling.

We can conclude by first noting that both $D$ and $\varphi'$ can be computed
in logarithmic space in the size of the input, and then by invoking
Prop.~\ref{prop-dtd}.
\end{proof}

Prop.~\ref{prop-efpe} is optimal:
\begin{proposition}\label{prop-efph}
  PFMC with $\varepsilon$-free grammars is \textsc{PSpace}-hard, even
  for fixed $G$ and $w$ and for $\varphi$ in $\PDLcr[\downarrow]$.
\end{proposition}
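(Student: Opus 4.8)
The plan is to reduce from the acceptance problem of a fixed polynomial-space Turing machine $M$ deciding a \textsc{PSpace}-complete language $L$ (for concreteness, one may take any fixed poly-space algorithm for TQBF), pushing the \emph{entire} input $x$ into the formula while keeping $G$ and $w$ fixed. I would fix $w=a$ to be a single terminal. Since $G$ is $\varepsilon$-free, every parse tree of $a$ has exactly one leaf and is therefore a single branch, i.e.\ a \emph{chain} of unit rules; by taking $G$ with productions $S\to\gamma$, $\gamma\to\gamma'$, and $\gamma\to a$ for all symbols $\gamma,\gamma'$ of a fixed nonterminal alphabet $\Gamma$, the language $L_{G,a}$ consists of exactly the chains whose node labels spell an arbitrary sequence over $\Gamma$ (rooted at $S$ and terminated by the leaf $a$). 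The fixed $\Gamma$ is chosen rich enough to encode $M$'s tape symbols, states, a head marker, and a configuration separator $\#$, which is possible precisely because $M$ is fixed.

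Given an input $x$, I would build a $\PDLcr[\downarrow]$ formula $\varphi_x$ of size polynomial in $|x|$ asserting that the chain spells a valid accepting computation $c_0\#c_1\#\cdots\#c_m$ of $M$ on $x$, each $c_i$ a configuration of length $p(|x|)$, where $p$ is the fixed space bound. Concretely, $\varphi_x$ is the conjunction of an \emph{initialization} clause pinning the first $p(|x|)$ labels below the root to the (input-dependent, polynomially long) encoding of $M$'s initial configuration on $x$ via a fixed-length composition of $\downarrow$ steps; an \emph{acceptance} clause $\tup{\downarrow^\ast}q_{\mathrm{acc}}$; and a \emph{transition} clause of shape $[\downarrow^\ast]\psi_{\mathrm{step}}$ enforcing at every position that the cell one full configuration ($p(|x|)+1$ steps) ahead is the correct local update of the length-three window at the current node. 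The existential quantification over trees inherent in PFMC supplies the single computation we need, while the $[\downarrow^\ast]$ box supplies universal quantification over all positions; both the fixed-distance hops and the closure $\downarrow^\ast$ stay within the core fragment, so $\varphi_x$ is genuinely in $\PDLcr[\downarrow]$.

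The hard part is that $\PDLcr[\downarrow]$ offers only downward navigation, whereas the usual transition check reads cell $(i,j)$ of $c_{i+1}$ from the window $(i,j{-}1),(i,j),(i,j{+}1)$ of $c_i$ and thus seems to require looking back to the left neighbour $(i,j{-}1)$. My fix is to anchor each local test at the \emph{left} end of its window: standing at $(i,j{-}1)$, the two remaining window cells are reached by $\downarrow$ and $\downarrow;\downarrow$, and the updated cell $(i{+}1,j)$ by $\downarrow^{p(|x|)+1}$, so every comparison is purely forward. Some care is still needed at configuration boundaries---windows abutting a separator $\#$, and the leftmost and rightmost cells---which I would dispatch by a bounded case distinction inside $\psi_{\mathrm{step}}$.

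Correctness then follows: a chain in $L_{G,a}$ satisfies $\varphi_x$ at its root iff it encodes an accepting run of $M$ on $x$, i.e.\ iff $x\in L$; and $\varphi_x$, together with the fixed $G$ and $w$, is computable from $x$ in logarithmic space. This yields \textsc{PSpace}-hardness already with $G$ and $w$ fixed and $\varphi$ in $\PDLcr[\downarrow]$.
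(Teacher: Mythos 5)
Your proposal is correct and takes essentially the same route as the paper's own proof: the paper likewise fixes $w=a$ and a fixed cyclic, $\varepsilon$-free grammar whose parse trees of $a$ are exactly single chains over a fixed label set, and then uses a polynomial-size $\PDLcr[\downarrow]$ formula built from fixed-length $\downarrow$-hops and $[\downarrow^\ast]$ to assert that the chain encodes an accepting run of a space-bounded machine (the paper reduces from LBA membership rather than from a fixed \textsc{PSpace} machine). The only substantive difference is how the fixed-$G$ requirement is discharged: you fix the machine so that its states and tape symbols can appear directly as chain labels, whereas the paper keeps the machine as part of the input and therefore encodes head position, state, and cell contents in unary blocks over a seven-symbol nonterminal alphabet---both devices work, and your window-anchoring trick for avoiding backward navigation plays the same role as the paper's anchoring of transition checks at configuration-initial nodes.
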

\begin{proof}[Proof Idea]
  The proof is by reduction from membership in a linear bounded
  automaton.  We fix $w=a$ for some symbol $a$ of $\Sigma$, and also
  fix the CFG $G$ to basically generate any single-strand tree with a
  root $S$ and a leaf $a$ over a fixed alphabet.  A
  $\PDLcr[\downarrow]$ formula of polynomial size then checks that
  this tree encodes an accepting run of the LBA.  See
  \appref{ax-efhard} for details.
\end{proof}

\subsubsection{{\PDL} Recognition}
The previous approaches to the recognition problem have used tree automata
techniques~\citep[e.g.][]{cornell00} or tableau-like
techniques~\citep{palm01}.  In both cases, exponential time upper
bounds were reported by the authors---to be fair, these algorithms
solve the \emph{parsing} problem and 
find a representation of \emph{all} the parses for $w$ compatible with
$\varphi$---, but we can improve on this thanks to Prop.~\ref{prop-efpe}:
\begin{corollary}\label{coro-mts}
  \PDL\ recognition is \textsc{PSpace}-complete; hardness holds even for
  fixed $w$ and for $\varphi$ in $\PDLcr[\downarrow]$.
\end{corollary}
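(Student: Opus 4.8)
The plan is to read \PDL\ recognition as the special case of PFMC in which the grammar is a \emph{universal} one generating every candidate tree, and then to sandwich the corollary between Prop.~\ref{prop-efpe} and Prop.~\ref{prop-efph}. Indeed, a recognition instance $(\varphi,w,s)$ asks for \emph{some} tree with yield $w$ and root label $s$ satisfying $\varphi$; since $\#{AP}$ may be taken finite, this is exactly a PFMC question $\exists t\in L_{G,w}.\,t\models\varphi$ for a grammar $G$ whose parse forest is the set of all trees over $\#{AP}$ with root $s$. Crucially, because the problem statement forbids $\varepsilon$-labeled leaves, such a $G$ is $\varepsilon$-free (every leaf carries a genuine letter of $w$) even though it is cyclic---precisely the regime covered by Prop.~\ref{prop-efpe}.

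For the upper bound I would construct this universal $G$ explicitly. The one delicate point---and the main obstacle---is that a finite CFG has right-hand sides of bounded length and so can only generate trees of bounded rank, whereas a model of $\varphi$ may need unbounded branching. I would resolve it exactly as announced in Example~\ref{ex-nav}, passing to binary (first-child/next-sibling) encodings: $G$ keeps a fixed schema of binary productions $\hat p\to c_1\,c_2$ ranging over all label choices, while its chains of unit productions provide unbounded depth---so $G$ is cyclic but, since every leaf still carries a genuine letter of $w$, remains $\varepsilon$-free. Simultaneously I would translate $\varphi$ into $\varphi'$ over the encoding, rewriting $\downarrow$ as $\swarrow;\rightarrow^\ast$ following Example~\ref{ex-nav} and relabelling the atomic propositions both to separate leaf-copies from internal-copies and to require the label $s$ at the root. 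This rewriting is fixed and local, hence computable in logarithmic space, so $(G,w,\varphi')$ reduces the recognition instance to PFMC with an $\varepsilon$-free grammar and Prop.~\ref{prop-efpe} puts the problem in \textsc{PSpace}.

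For the matching lower bound I would simply reread the proof of Prop.~\ref{prop-efph}. There the word is fixed to a single letter $w=a$ and the grammar is fixed to generate \emph{every} single-strand tree with root $S$ and leaf $a$, with a $\PDLcr[\downarrow]$ formula checking that the chain encodes an accepting run of a linear bounded automaton. But a tree whose yield is the one-letter word $a$ is necessarily a chain from its root down to the unique leaf $a$, so the parse forest $L_{G,a}$ used there is nothing but the set of all trees with yield $a$ and root $S$: that PFMC instance \emph{is} a recognition instance $(\varphi,a,S)$. Hence the same reduction from \textsc{LBA} membership witnesses \textsc{PSpace}-hardness of \PDL\ recognition already for the fixed word $w=a$ and for $\varphi$ in $\PDLcr[\downarrow]$, which together with the upper bound gives \textsc{PSpace}-completeness.
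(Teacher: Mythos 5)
Your overall architecture matches the paper's: for the upper bound, reduce recognition to PFMC over a universal, cyclic but $\varepsilon$-free binary grammar and invoke Prop.~\ref{prop-efpe}; for the lower bound, reuse the LBA reduction behind Prop.~\ref{prop-efph}. But both halves have a genuine gap as written. For the upper bound, the first-child/next-sibling encoding is the wrong binarization because it does \emph{not} preserve yields: under that encoding, a leaf of $t$ that has a right sibling becomes an \emph{internal} node of the encoded tree (its next sibling becomes its child), so the leaves of the encoding are only the original leaves that happen to be last children. For instance $t=s(a,b)$ encodes as the chain $s(a(b))$, whose yield is $b$, not $ab$. Hence the encoding of a tree with yield $w$ generally does not belong to $L_{G,w}$, and the forward direction of your reduction fails; note also that Example~\ref{ex-nav}'s identity ${\downarrow}\equiv{\swarrow};\rightarrow^\ast$ relates relations \emph{inside one unranked tree} and is not by itself a translation into the encoded binary tree. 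What is needed is a \emph{yield-preserving} binarization, as when putting a grammar in quadratic form: a node with children $c_1\cdots c_m$ is replaced by a spine of binary nodes carrying distinguishable labels (so that $\varphi'$ can tell spine nodes from genuine ones and relativize $\downarrow$ and $\rightarrow$ accordingly), while the original leaves remain the leaves, in the same left-to-right order. This is what the paper's proof does with its disjoint copy of $\#{AP}$ for internal labels and the universal productions $A\to X\,Y$ and $A\to X$.

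For the lower bound, your key claim---that the parse forest $L_{G,a}$ of the grammar in \appref{ax-efhard} is the set of \emph{all} trees with yield $a$ and root $S$---is false. Trees with yield $a$ are indeed chains, but $L_{G,a}$ contains only the chains respecting $G$'s productions (after $S$ only $H$ or $\bar{H}$ may follow, after $C$ or $\bar{C}$ only $C,\bar{C},A,\bar{A}$, etc.); a chain such as $S(S(a))$ has yield $a$ and root $S$ but is not a parse tree. So that PFMC instance is \emph{not} literally a recognition instance, and hardness does not transfer for free: one must additionally rule out that some chain outside $L_{G,a}$ satisfies $\varphi_{M,x}$ even though $M$ rejects $x$. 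This is repairable in two ways: either check that $\varphi_{M,x}$ is self-enforcing---its conjuncts $\varphi_{\mathit{conf}}$ and $\varphi_{\mathit{init}}$ already pin down the entire phase structure below every $S$ node, so any chain with root $S$ and yield $a$ satisfying $\varphi_{M,x}$ contains the encoding of an accepting run---or do as the paper does and reduce \emph{generically} from Prop.~\ref{prop-efph}, conjoining to $\varphi$ a formula $\varphi_G$ encoding the (chain) grammar itself in the style of Example~\ref{ex-cfg}, so that recognition of $\varphi\wedge\varphi_G$ on $w$ is equivalent to the PFMC instance. Either way, the step you elided must be made explicit; as it stands, the asserted equality of the two tree sets is simply wrong.
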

\begin{proof}[Proof Sketch]
  The lower bound stems from an easy reduction from
  Prop.~\ref{prop-efph}: we can encode the
  grammar $G$ into a $\PDLcr[\downarrow]$ formula $\varphi_G$ as in
  Example~\ref{ex-cfg} and reduce to the recognition problem for $w$ and
  $\varphi\wedge\varphi_G$.

  For the upper bound, we can assume as usual $\varphi$ to work on a
  binary encoding of trees.  The idea is to reduce to the PFMC problem
  with a `universal' CFG that accepts all the trees of rank at most
  2 over $\#{AP}$.  A smallish issue is that we need to separate between
  nonterminal and terminal labels, but we can create a disjoint copy
  $N\eqdef\{P\mid p\in\#{AP}\}$ of $\#{AP}$ and interpret
  $\varphi$ as a formula over $N\uplus\#{AP}$ with $P\vee p$ as
  the interpretation of $p$.  This grammar has then $S$ as axiom and
  productions $A\to X\ Y$ and $A\to X$ for all $A$ in $N$ and $X,Y$ in
  $V$, and we can resort to Prop.~\ref{prop-efpe} to conclude.
\end{proof}

%\section{Related Work}\label{sec-rel}
%\input{sec-rel}

\section{Conclusion}\label{sec-concl}
Because \PDL\ formul\ae\ can freely navigate in trees, properties that
rely on long-distance relations are convenient to express, in contrast
with the higly local view provided by a grammar production.  However,
this expressiveness comes at a steep price, as recognition problems
using \PDL\ are \textsc{ExpTime}-complete instead
of \textsc{PTime}-complete on CFGs.

The \PDL\ model-checking of the parse trees of a CFG allows to mix the
two approaches, using a grammar for the bulk work of describing trees
and using more sparingly a \PDL\ formula for the fine work.  We argue
that this trade-off finds natural applications in computational
linguistics and compilers construction, where sensible restrictions on
the grammar lower the complexity to \textsc{NPTime}
or \textsc{PSpace}.

An additional consequence is that the recognition problem for \PDL\ is
in \textsc{PSpace}.  This is a central problem in model-theoretic
syntax, and this lower complexity suggests that `lazy' approaches,
in the spirit of the tableau construction of
\citet{palm01}, should perform significantly better than the automata
constructions of \citet{cornell00}.

More broadly, we think that our initial investigations of
model-checking parse trees open the way to a new range of applications
of model-checking techniques on parse structures and grammars.  In
particular, as pointed out in our examples, both in computational
linguistics and in ambiguity filtering for programming languages,
there are incentives to look at classes of models that generalize
finite trees over a finite label set---we expect furthermore these
generalizations to differ from the data tree model found in the XML
literature.

\section*{Acknowledgments}

The authors thank the anonymous reviewers for their helpful comments;
in particular for pointing out that the $\PDL$ formula in
Example~\ref{ex-amb} could be expressed in XPath 1.0.

\bibliography{journals,conferences,pdlcfl}
\clearpage
\appendix\pagenumbering{roman}\renewcommand{\leftmark}{Appendices}
\section{General Case}\label{ax-gal}
PDL satisfiability is known to be \textsc{ExpTime}-complete in general
\citep{fischer79}.  The general case of the parse forest
model-checking problem, i.e.\ when $G$ is an arbitrary grammar, is
also \textsc{ExpTime}-complete.  The upper bound follows from
classical techniques~\citep{calvanese09}---see the
proof sketch of Proposition~\ref{prop-ete}.

The lower bound could be proven by a reduction from PDL
satisfiability
using a `universal' CFG as in the proof of Corollary~\ref{coro-mts}.
However, this proof does not lend itself very easily to the restricted
case we want to consider, where $w$ and $G$ are fixed and $\varphi$ is
a downward $\PDLcr[\downarrow]$ formula.  We present in this section a
reduction from the \emph{two-player corridor game}, which is known to
be \textsc{ExpTime}-hard~\citep{chlebus86}, adapted from a similar
proof for the hardness of PDL satisfiability by
\citet[Theorem~6.52]{blackburn01}.

\paragraph{Two Player Corridor Game} sees two players, Eloise and
Abelard, compete by tiling a corridor.  The tiles are squares
decorated by $s+2$ different patterns $T=\{t_0,\dots,t_{s+1}\}$; two
binary relations $U$ and $R$ over $T$ tell if a tile can be placed on
top of the other and to the right of the other.  Two tiles are
distinguished: $t_0$ is called the \emph{white} tile and $t_{s+1}$ the
\emph{winning} tile.  The corridor is made of $n+2$ columns of
infinite height, with the first and last columns filled with white
tiles $t_0$ and delimiting $n$ columns for the play.  The initial
bottom row is tiled by a sequence $I_1\cdots I_n$ of tiles, which is
assumed to be correct, i.e.\ to respect the $R$ relation.

The players alternate and choose a next tile in $T$ and place it in
the next position, which is the lowest leftmost free one---thus the
chosen tile should match the tile to its left (using $R$) and the tile
below (using $U$)---; see \autoref{fig-game-pos}.  Eloise starts the
game and wins if after a finite number of rounds, the winning tile
$t_{s+1}$ is put in column $1$.  Given an instance of the 2-players
corridor tiling game, i.e.\ $\tup{s+2,I_1\cdots I_n,R,U}$, deciding
whether Eloise has a \emph{winning strategy}, i.e.\ a way of winning
no matter what Abelard plays, is \textsc{ExpTime}-complete.

\paragraph{Notation}  We represent strategy trees as parse trees.
Our $\PDLcr[\downarrow]$ formula $\varphi$ will ensure that the parse
tree is indeed a valid game tree, and that it encodes a winning
strategy for Eloise.

\renewcommand{\floatpagefraction}{.7}
\begin{figure}[tbp]
  \centering
  \subfloat[\label{fig-game-pos}One of Eloise's turns in the game.]{%
  \begin{tikzpicture}[semithick,shorten >=1pt,node distance=.1cm,every
      node/.style={font=\footnotesize,text height=2ex,text
        depth=.25ex,text centered}]
      \node[draw,text width=3.4ex](c00){$t_0$};
      \node[draw,above=of c00,text width=3.4ex](c01){$t_0$};
      \node[draw,above=of c01,text width=3.4ex](c02){$t_0$};
      \node[draw,above=of c02,text width=3.4ex](c03){$t_0$};
      \node[above=of c03,text width=3.4ex]{$\vdots$};
      \node[below=of c00]{$C_0$};
      \node[draw,right=of c00,text width=3.4ex](c10){$I_1$};
      \node[draw,above=of c10,text width=3.4ex](c11){};
      \node[draw,above=of c11,text width=3.4ex](c12){$t_{k_1}$};
      \node[below=of c10]{$C_1$};      
      \node[draw,right=of c10,text width=3.4ex](c20){$I_2$};
      \node[draw,above=of c20,text width=3.4ex](c21){};
      \node[draw,above=of c21,text width=3.4ex](c22){$t_{k_2}$};
      \node[below=of c20]{$C_2$};
      \node[right=of c20](d0){$\cdots$};
      \node[right=of c21](d1){$\cdots$};
      \node[right=of c22](d2){$\cdots$};   
      \node[draw,right=of d0,text width=3.4ex](cim0){$I_{i-1}$};
      \node[draw,above=of cim0,text width=3.4ex](cim1){};
      \node[draw,above=of cim1,text width=3.4ex](cim2){$t_{k_{i-1}}$};
      \node[below=of cim0]{$C_{i-1}$};
      \node[draw,right=of cim0,text width=3.4ex](ci0){$I_{i}$};
      \node[draw,above=of ci0,text width=3.4ex](ci1){$t_{k_{i}}$};
      \node[below=of ci0]{$C_{i}$};
      \node[draw,right=of ci0,text width=3.4ex](cip0){$I_{i+1}$};
      \node[draw,above=of cip0,text width=3.4ex](cip1){$t_{k_{i+1}}$};
      \node[below=of cip0]{$C_{i+1}$};
      \node[right=of cip0](e0){$\cdots$};
      \node[right=of cip1](e1){$\cdots$};
      \node[draw,right=of e0,text width=3.4ex](cn0){$I_{n}$};
      \node[draw,above=of cn0,text width=3.4ex](cn1){$t_{k_{n}}$};
      \node[below=of cn0]{$C_{n}$};
      \node[draw,right=of cn0,text width=3.4ex](cnp0){$t_0$};
      \node[draw,above=of cnp0,text width=3.4ex](cnp1){$t_0$};
      \node[draw,above=of cnp1,text width=3.4ex](cnp2){$t_0$};
      \node[draw,above=of cnp2,text width=3.4ex](cnp3){$t_0$};
      \node[above=of cnp3,text width=3.4ex]{$\vdots$};
      \node[below=of cnp0]{$C_{n+1}$};   
      \node[draw,above right=1cm and 1cm of cim2,text width=3.4ex](t){$t_j$};
      \path[->,auto,color=black!50,dashed] (cim2.east) edge node{$R?$}
      (t.west) (ci1.north) edge[swap] node{$U?$} (t.south);
      \node[above=1cm of d2,text ragged]{Eloise's turn, move $a$:};
  \end{tikzpicture}}\linebreak
  \subfloat[\label{fig-game-tree}The tree encoding of the turn.]{%
    \begin{tikzpicture}[node distance=.1cm,semithick,shorten
      >=1pt,level/.style={level distance=.75cm,sibling
        distance=1.25cm},every node/.style={font=\footnotesize,text
        height=1.5ex,text depth=.25ex}]
      \node(X){$X$};
      \node[below left=.8cm and 4.3cm of X](M){$M$};
      \node[below left=.15cm and .15cm of M](X1){$X$};
      \node[below=.15cm of M](M1){$M$};
      \node[below left=.15cm and .15cm of M1](X2){$X$};
      \node[below=.6cm of M1](M2){$M$};
      \node[below left=.15cm and .15cm of M2](X3){$X$};
      \node[below=.15cm of M2](M3){$M$};
      \node[below=.15cm of M3](Me){$\varepsilon$};
      \node[left=-.8cm of X2]{\begin{minipage}{1cm}%
          \[\left\{\hspace*{-1.5ex}\begin{array}{c}\\[2cm]\end{array}\right.\]%
          \end{minipage}};
      \node[below left=-.8cm and .5cm of X2,rotate=90]{next moves};
      \path (X) edge (M.north) 
      (M) edge (X1.north) (M) edge (M1.north) 
      (M1) edge (X2.north) (M1) edge[dotted] (M2.north)
      (M2) edge (X3.north) (M2) edge (M3.north)
      (M3) edge (Me);
      \node[right=of M](L){$L$};\node[below=.15cm of L](Le) {$\varepsilon$};\path (L) edge (Le);
      \node[right=of L](E){$E$};\node[below=.15cm of E](Ee) {$\varepsilon$};\path (E) edge (Ee);
      \node[below right=-.15cm and -.1cm of E](P){\begin{minipage}{1cm}%
          \[\left\{\hspace*{-1.5ex}\begin{array}{c}P\\\vdots\\P \end{array}\right.\]%
          \end{minipage}};\node[below=.1cm of P](pe){};\node[below=.8cm of P](Pe) {$\varepsilon$};\path (pe) edge (Pe);
      \node[below left=-.5cm and .1cm of P,rotate=90]{$i$ times};
      \node[below right=0cm and .9cm of E](T){\begin{minipage}{1cm}%
          \[\left\{\hspace*{-1.5ex}\begin{array}{c}T\\T\\\vdots\\T \end{array}\right.\]%
          \end{minipage}};\node[below=.1cm of T](te){};\node[below=.8cm of T](Te) {$\varepsilon$};\path (te) edge (Te);
      \node[below left=-.9cm and 0cm of T,rotate=90]{$j+1$ times};
      \node[right=4.1cm of E](C0){$C$};
      \node[below right=0cm and -.1cm of C0](C){$C$};
      \node[below right=0cm and 1cm of C0](Q){\begin{minipage}{1cm}%
          \[\left.\begin{array}{c}1\\0\\\vdots\\1 \end{array}\hspace*{-1.5ex}\right\}\]%
          \end{minipage}};
      \node[below right=-1cm and .1cm of Q,rotate=-90]{$a$ in
        binary};\node[below=.1cm of Q](qe){};\node[below=.8cm of Q](Qe) {$\varepsilon$};\path (qe) edge (Qe);
      \path (X) edge (3,-1);
      \node[below left=.5cm and .8cm of C](T1){\begin{minipage}{1cm}%
          \[\left\{\hspace*{-1.5ex}\begin{array}{c}T\\T\\\vdots\\T \end{array}\right.\]%
          \end{minipage}};\node[below=.1cm of T1](t1e){};\node[below=.8cm of T1](T1e) {$\varepsilon$};\path (t1e) edge (T1e);
      \node[below left=-.9cm and 0cm of T1,rotate=90]{$k_1+1$ times};
      \node[below right=0cm and -.1cm of C](C1){$C$};
      \node[below left=.5cm and .2cm of C1](T2){\begin{minipage}{1cm}%
          \[\left\{\hspace*{-1.5ex}\begin{array}{c}T\\T\\\vdots\\T \end{array}\right.\]%
          \end{minipage}};\node[below=.1cm of T2](t2e){};\node[below=.8cm of T2](T2e) {$\varepsilon$};\path (t2e) edge (T2e);
      \node[below left=-.9cm and 0cm of T2,rotate=90]{$k_2+1$ times};
      \node[below right=.2cm and 0cm of C1](Cn){$C$};
      \node[below left=.5cm and -.2cm of Cn](Tn){\begin{minipage}{1cm}%
          \[\left\{\hspace*{-1.5ex}\begin{array}{c}T\\T\\\vdots\\T \end{array}\right.\]%
          \end{minipage}};\node[below=.1cm of Tn](tne){};\node[below=.8cm of Tn](Tne) {$\varepsilon$};\path (tne) edge (Tne);
      \node[below right=of Cn](Cnp){$C$};
      \node[below=of Cnp](Tnp){$T$};
      \node[below=of Tnp](Tnpe){$\varepsilon$};
      \node[below left=0cm and 1.6cm of C0](T0){$T$};
      \node[below=of T0](T0e){$\varepsilon$};
      \node[below left=-.9cm and 0cm of Tn,rotate=90]{$k_n+1$ times};
      \path (C) edge (.1,-2.1)
      (C) edge (C1)
      (C0) edge (C) 
      (C0) edge (T0.north)
      (C1) edge (1.15,-2.6)
      (C1) edge[dotted] (Cn)
      (Cn) edge (1.9,-3.4)
      (Cn) edge (Cnp)
      (Cnp) edge (Tnp)
      (Tnp) edge (Tnpe)
      (T0) edge (T0e);
    \end{tikzpicture}}
  \caption{\label{fig-game}A turn of the 2-players corridor game and
    its tree encoding.}
\end{figure}
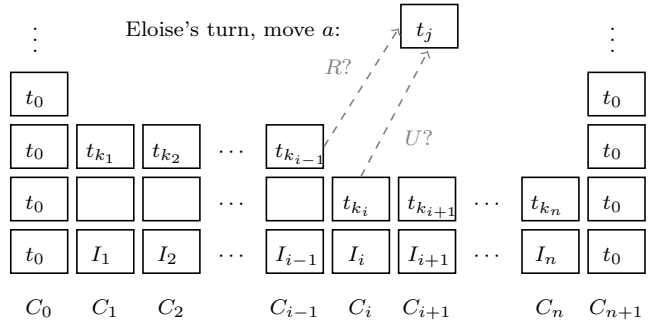
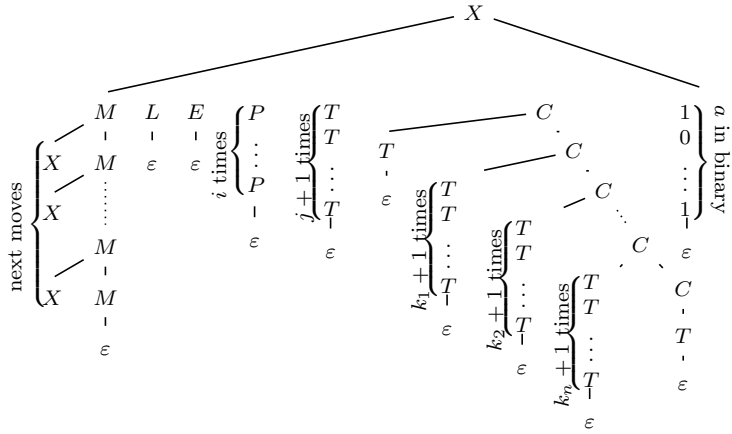

A game turn is encoded locally by an $X$-labeled node and its
immediate children, with the next reachable configurations reachable
through a path of $M$-labeled nodes.  More precisely, each $X$ node
has the following children (see \autoref{fig-game-tree}):
\begin{itemize}
\item a node labeled either $W$ or $L$, stating that the configuration
  is winning or not for Eloise,
\item a node labeled either $E$ or $A$, stating whether it is Eloise's
  or Abelard's turn to move,
\item a chain of $i$ $P$-labeled nodes, stating that the current
  playing column is $C_i$,
\item a chain of $j+1$ $T$-labeled nodes, stating that the chosen
  tile at this turn is $t_j$,
\item a comb-shaped subtree of $n+2$ nodes labeled by $C$, describing the
  contents of the top layer in the corridor (which in general spans
  two rows), with a strand of $k+1$ $T$-labeled nodes telling for each
  column that tile $t_k$ is on top,
\item a chain of $m$ nodes labeled $0$ or $1$, encoding in binary the
  number of the moves made so far; this chain does not need to be
  longer than $m\eqdef\lceil\log(2n^{s+3})\rceil$---or some move would
  have been repeated.
\end{itemize}

\paragraph{The Grammar}  We \emph{fix} $w\eqdef\varepsilon$ and
$G\eqdef\tup{N,\emptyset,P,X}$ over the nonterminal alphabet
$N=\{X,M,W,L,E,A,P,T,C,0,1\}$ with productions (with a slightly
extended syntax with alternatives built-in the productions right-hand
sides):
\begin{align*}
  X&\to M\,(W\mid L)\,(E\mid A)\,P\,T\,C\,(0\mid 1)&
  M&\to X\,M\mid\varepsilon&
  C&\to T\mid T\,C\\
  P&\to P\mid\varepsilon&
  T&\to T\mid\varepsilon&
  W&\to\varepsilon\\
  L&\to\varepsilon&
  E&\to\varepsilon&
  A&\to\varepsilon\\
  0&\to 0\mid 1\mid \varepsilon&
  1&\to 0\mid 1\mid \varepsilon
\end{align*}
Observe that all the tree encodings of strategies are generated by
$G$, but that not all the trees of $G$ encode a strategy: for
instance, the number of $C$'s might be different from $n+2$, or the
described tiling might not respect the placement constraints, etc.  The
formula will check these conditions.

\paragraph{Game Structure and Mechanics}
We use the non-terminal labels and $\varepsilon$ as atomic
propositions in our \PDL\ formula.  Because there are at most $s+2$
choices of tiles at each turn, we can define the path
\begin{equation*}
  \#{move}\eqdef\sum_{i=1}^{s+2}(\downarrow;M?)^{i};\downarrow;X?
\end{equation*}
that relates two successive configurations.  Further define the
following formul\ae\ for $0\leq i\leq n+1$, $0\leq j\leq s+1$, $1\leq
a\leq m$ and $b$ in $\{0,1\}$:
\begin{align*}
  \#p(i)&\eqdef\tup{(\downarrow;P?)^i;\downarrow}\varepsilon&
  \#t(j)&\eqdef\tup{(\downarrow;T?)^{j+1};\downarrow}\varepsilon\\
  \#q(a,b)&\eqdef\tup{(\downarrow;(0+1)?)^{a}}b&
  \#c(i,j)&\eqdef\tup{(\downarrow;
    C?)^{i+1};(\downarrow;T?)^{j+1};\downarrow}\varepsilon
\end{align*}
In an $X$ node, $\#p(i)$ holds if the current column is $C_i$,
$\#t(j)$ if the chosen tile is $T_j$, $\#c(i,k_i)$ if tile $T_{k_i}$
is on top of column $C_i$, and $\#q(m,b)$ if the binary encoding of
the current move number has bit $a$ set to $b$.

We ensure some preliminary structure on the game tree: At the start of
the play, the current player must be Eloise, and the referee should have placed the initial
tiles in the first row.  The counter must be initialized to zero.
\begin{align*}
  \varphi_1 &\eqdef X\wedge
  (\tup{\downarrow}E)\wedge\#p(1)\wedge\bigwedge_{i=1}^n\#c(i,I_i)\wedge\bigwedge_{a=1}^m\#q(a,0)\;.
  \intertext{In every state, the tiles in columns $0$ and $n+1$
    should be the white tile $t_0$.}
  \varphi_2 &\eqdef
  [\downarrow^\ast]X\imply\#c(0,0)\wedge\#c(n+1,0)\;.
\intertext{In every state with current column $i$, the next move
  should be at position $(i\mathbin{\mathrm{mod}}n)+1$.}
  \varphi_3 &\eqdef\bigwedge_{i=1}^n [\downarrow^*] (X \wedge \#p(i))
  \imply [\#{move}] \#p((i\mathbin{\mathrm{mod}}n)+1)\;.
  \intertext{The columns are updated with the correct tiles.}
  \varphi_4 &\eqdef\bigwedge_{i=1}^n\bigwedge_{j=0}^{s+1}[\downarrow^*](X\wedge\#p(i)\wedge\#t(j))\imply[\#{move}]\#c(i,j)\\
         &\wedge\bigwedge_{i\neq j=1}^n\bigwedge_{k=0}^{s+1}(X\wedge\#p(i)\wedge\#c(j,k))\imply[\#{move}]\#c(j,k)\;.
\intertext{Players alternate.}
  \varphi_5&\eqdef[\downarrow^\ast](X\wedge\tup{\downarrow}E\equiv[\#{move}]\tup{\downarrow}A)\;.
  \intertext{The chosen tiles verify the adjacency constraints: define
  for this the Boolean:}
  \#{adj}(i,j,k,\ell)&\eqdef t_\ell\mathbin{U}t_j\wedge(i>0\imply t_k\mathbin{R}t_j)\wedge
  (i=0\imply t_0\mathbin{R}t_j)\wedge(i=n\imply t_j\mathbin{R}t_0)\\
  \varphi_6&\eqdef\bigwedge_{i=1}^n\bigwedge_{j,k,\ell=0}^{s+1}[\downarrow^\ast](X\wedge\#p(i)\wedge\#t(j)\wedge\#c(i-1,k)\wedge\#c(i,\ell))\imply
  \#{adj}(i,j,k,\ell)\;.
\intertext{The counter is incremented.}
      \varphi_7
      &\eqdef\bigwedge_{d=1}^m\bigwedge_{a=1}^{d-1}\bigwedge_{b\in\{0,1\}}[\downarrow^\ast]\big(X
        \wedge \#q(a,b)\wedge \#q(d,0)
        \wedge\bigwedge_{e=d+1}^m\#q(e,1)\big)\\
        &\hspace{9em}\imply[\#{move}]\big(\#q(a,b)\wedge\#q(d,1)\wedge\bigwedge_{e=d+1}^m\#q(e,0)\big)\,.
\end{align*}

\paragraph{Winning Strategy}  The previous formul\ae\ were making
sure that the tree would be a proper game tree.  We want now to check
that it describes a winning strategy for Eloise:  We should check that
all the possible moves of Abelard are tested:
\begin{align*}
  \varphi_8 &\eqdef\bigwedge_{i=1}^n\bigwedge_{j,k,\ell =
    0}^{s+1}[\downarrow^\ast](X\wedge\#p(i)\wedge\tup{\downarrow}E\wedge\#t(k)\wedge\#c((i\mathbin{\mathrm{mod}}n)+1,\ell)\wedge\#{adj}(i,j,k,\ell))\imply\tup{\#{move}}\#t(j)\;.
\intertext{Finally, the winning condition should be met:}
  \varphi_9 &\eqdef (\tup{\downarrow}W) \wedge
  [\downarrow^\ast](X\wedge\tup{\downarrow}W)\imply\big(\#c(1,s+1)\tag{the
    game is immediately winning}\\
  &\phantom{ (\tup{\downarrow}W) \wedge
  [\downarrow^\ast](X\wedge\tup{\downarrow}W)\imply\big(}\vee((\tup{\downarrow}E)\wedge(\tup{\#{move};\downarrow}W))\tag{Eloise
  can win later}\\
  &\phantom{ (\tup{\downarrow}W) \wedge
  [\downarrow^\ast](X\wedge\tup{\downarrow}W)\imply\big(}\vee(\tup{\downarrow}A)\wedge(\tup{\#{move}}\top)\wedge[\#{move}]\tup{\downarrow}W)\big)\;.\tag{None
  of Abelard's moves can prevent Eloise from winning}
\end{align*}

Finally, our final $\PDLcr[\downarrow]$ formula is
$\varphi\eqdef\bigwedge_{i=1}^9\varphi_i$.
Because $G$ and $w$ are fixed and $\varphi$ can be computed in space
logarithmic in the size of the game instance, we have therefore shown
the general PFMC problem to be \textsc{ExpTime}-hard.

\section{Acyclic and {\boldmath $\varepsilon$}-Free Case: Proposition~\ref{prop-aef}}\label{ax-nphard}
We prove here the lower bound part of Proposition~\ref{prop-aef}: the
PFMC problem is \textsc{NPTime}-hard for acyclic and
$\varepsilon$-free grammars.

\begin{proof} We reduce 3SAT to our problem.

Fix the grammar $G\eqdef\tup{\{S,F,T\},\{a\},P,S}$ with productions:
\begin{align*}
  S&\to S\,F\mid S\,T\mid F\mid T&
  F&\to a&
  T&\to a
\end{align*} and consider an instance $\psi=\bigwedge_{i=1}^mC_i$ of
3SAT where each $C_i$ is a disjunction of literals over $n$ variables
$\{x_1,\dots,x_n\}$.  Define $w\eqdef a^n$.

Any parse tree $t$ of $w$ will have a `comb' shape of length $n$
with $S$-labeled nodes, each giving rise to one of $F$ or $T$ as a
child.  The parse forest is thus in bijection with the set of
valuations of $\{x_1,\dots,x_n\}$: if the value of variable $x_i$ is
$0$, then in our encoding, the $i$th $S$ node has a node with label
$F$ as a child; otherwise, it has a node with label $T$ as a child.

Given such an encoded valuation, our formula $\varphi$ must verify
that each clause is satisfied.  For a clause
$C_i=\ell_{i,1}\vee\ell_{i,2}\vee\ell_{i,3}$ with $\ell_{i,j}=x_{k_j}$
or $\ell_{i,j}=\neg x_{k_j}$, define
$\varphi_i\eqdef\bigvee_{j=1}^3\tup{(S;\downarrow)^{k_j}}\beta_{i,j}$
where $\beta_{i,j}=F$ if $\ell_{i,j}=\neg x_{k_j}$ and $\beta_{i,j}=T$
otherwise.  Finally, let $\varphi\eqdef\bigwedge_{i=1}^m \varphi_i$.
Then $t\models\varphi$ if and only if the corresponding assignment of
the variables is a satisfying assignment.  Because $G$ is fixed and
$w$ and $\varphi$ can be computed in space logarithmic in the size of
the 3SAT instance, this shows the \textsc{NPTime}-hardness of the PFMC
problem in the acyclic $\varepsilon$-free case.  Note that $\varphi$
is in $\PDLcr[\downarrow]$.
\end{proof}

\section{Model-Checking Non-Recursive DTDs: Proposition~\ref{prop-dtd}}\label{ax-dtd}
We present in this section a proof of Proposition~\ref{prop-dtd}: the
satisfiability of a \PDL\ formula $\varphi$ in presence of a
non-recursive DTD $D$ is \textsc{PSpace}-complete.

The lower bound is proved as Proposition~5.1 by \citet{benedikt08},
and we follow their general proof plan from Lemma~7.5 for the upper
bound.  As presented in the main text, the fact that we consider a
non-recursive DTD means that the height of any tree of interest is
bounded by $|N|$ the number of nonterminals of the DTD.  The proof
plan is then to consider XML word encodings of trees, and construct
two \emph{2-way alternating parity word automata} (2APWA) $\?A_D$ and
$\?A_\varphi$ of polynomial size which will respectively recognize the
XML encodings of the trees of $D$ and of the models of $\varphi$ of
height bounded by $|N|$.  Then, by taking the conjunction of the two
automata, we reduce the initial satisfiability problem to a 2APWA
emptiness problem, which is known to be in \textsc{PSpace} by the
results of \citet{serre06}.

We can find a suitable construction for an automaton $\?A_D$ for $D$ as
Claim~7.7 of \citep{benedikt08}, thus we will only present the
construction of $\?A_\varphi$.

\newcommand{\cl}[1]{\tup{/#1}}
\paragraph{XML Encoding}  Define the alphabet
\begin{align*}
  \#{XML}(N)&\eqdef\{\tup{X}, \tup{/X}\mid X \in N\}
  \intertext{%
    and choose a fresh root symbol $\#r$ not in $N$.  We encode our a
    tree $t$ as $\tup{\#r}\#{stream}(t)\cl{\#r}$ where the \emph{XML
    streaming} function is defined inductively on terms by}
  \#{stream}(f(t_1\cdots t_m))&\eqdef
  \tup{f}\#{stream}(t_1)\cdots\#{stream}(t_m)\cl{f}\;.
\end{align*}

\paragraph{2-Way Alternating Parity Word Automata}  A
\emph{positive boolean formula} $f$ in $\mathbb{B}^+(X)$ over a set $X$ of
variables is defined by the syntax
\begin{align*}
  f &::= \top\mid\bot\mid f\wedge f\mid f\vee f\;.
\end{align*}
A subset $X'\subseteq X$ satisfies a formula $f$, written $X'\models
f$, if the formula is satisfied by the valuation $x\mapsto \top$
whenever $x\in X'$ and $x\mapsto\bot$ if $x\in X'\setminus X$.

A \emph{2-way alternating parity word automaton} is a tuple
$\?A=\tup{Q,\Sigma,\delta,q_0,c}$ where $Q$ is a finite set of states,
$\Sigma$ a finite alphabet, $q_0\in Q$ an initial state, $c$ a
coloring from $Q$ to a finite set of priorities
$C\subseteq\mathbb{N}$, and $\delta$ a transition function from
$Q\times\Sigma$ to $\mathbb{B}^+(Q\times\{-1,0,1\})$ that associates
to a current state and current symbol boolean formul\ae\ on pairs
$(q',d)$ of a new state $q'$ and a direction $d$.

A \emph{run} of a 2APWA on a finite word $w=a_1\cdots a_n$ in
$\Sigma^\ast$ is a generally infinite tree with labels in
$Q\times\{1,\dots,n\}$ holding a current state and a current position
in $w$, such that the root is labeled $(q_0,1)$, and every node
labeled $(q,i)$ with has a children set
$\{(q_1,i_1),\dots,(q_m,i_m)\}$ that satisfies $\delta(q,a_i)$.  A run
is \emph{accepting} iff for every branch, the smallest priority $c(q)$
that occurs infinitely often among the nodes $(q,i)$ is even---this
also means in particular that any finite run is accepting---, and $w$
is accepted if there exists some accepting run for it.

\paragraph{Inductive Construction}
We construct
$\?A_\varphi\eqdef\tup{Q_\varphi,\Sigma,\delta_\varphi,q_{0,\varphi},c_\varphi}$
by induction on the subterms of the formula $\varphi$.
We work with the alphabet
$\Sigma\eqdef\mathsf{XML}(N)\uplus\{\tup{\#r},\cl{\#r}\}$ and set
$n\eqdef|N|$---which is the maximum height of any tree of $D$.  The
guiding principles in this construction is that our inductively
constructed automaton will track their height relative to that of
their starting position.  Because we are working on trees of
bounded depth, this can be achieved by considering states that combine
a `control' state with a height in $\{-n,\dots,n\}$. 

Let us start with the base cases for node formul\ae: by convention, our
automata for a node formul\ae\ must check that their starting
positions are labeled by opening tags:
\begin{description}
\item[{\boldmath $\?A_p$}] The automaton checks
  if it starts at an opening node $\tup p$. It immediately goes into either
  an accepting or rejecting state.  Formally, $Q_p\eqdef\{q_{p,0}\}$, the
  coloring $c_p$ maps $q_{p,0}$ to $1$, and
  $\delta(q_{p,0},\tup{p})\eqdef\top$ and $\delta_p(q_{p,0},X)=\bot$ for all
  $X\neq\tup{p}$.
\item[{\boldmath $\?A_\top$}]  The automaton immediately goes into
  an accepting state, unless it is at a closing node or the root node.
  Formally, $Q\eqdef\{q_{\top,0}\}$ and $c_\top$ is defined by
  $c_\top(q_{\top,0})\eqdef 1$; $\delta_\top(q_0,X)$ is defined as $\top$ for
  $X=\tup{p}$ in $\#{XML}(N)$ and as $\bot$ otherwise.
\end{description}\bigskip

The automata $\mathcal{A}_\pi$ for $\pi$ a path formula additionally
carry a distinguished subset $C_\pi\subseteq Q_\pi$ of
\emph{continuation} states, such that 
there is a `partial run' from some initial position with branches
starting from their initial state, which are either infinite but
verifying the parity condition, or are finite but end in a continuation
state in a position related to the initial one through $\sem{\pi}$.
Let us see this at work with the base cases of path formul\ae:
\begin{description}
\item[{\boldmath $\?A_\downarrow$}] The automaton moves right from the 
  initial node while maintaining the depth relative to this initial
  node.  It stops (goes into a dead state) if it reaches a node at
  the same or lesser depth than the initial node.  All the 
  visited nodes with a relative depth of 1 are direct children of the
  initial node, and therefore visited by continuation states.  We set
  where $Q_\downarrow\eqdef\{q_0,q_1,\cdots,q_n\}$ with
  $q_{\downarrow,0}\eqdef q_0$; the coloring $c_\downarrow$ is
  identically $1$ on $Q_\downarrow$, $C_\downarrow=\{q_1\}$, and
  $\delta_\downarrow$ is defined by:
  \begin{align*}
    \delta_\downarrow(q_i, \langle p \rangle) &\eqdef (1,q_{i+1}), &&i<n, p\in N&
    \delta_\downarrow(q_n, \langle p \rangle) &\eqdef\bot, &&p \in N \\
    \delta_\downarrow(q_i, \langle /p \rangle) &\eqdef(1,q_{i-1}), &&i>1, p \in N&
    \delta_\downarrow(q_0, \langle /p \rangle)&\eqdef\delta_\downarrow(q_1, \langle /p \rangle)\eqdef\bot, &&p \in N \\
    \delta_\downarrow(q_i, \langle /\mathsf{r} \rangle) &\eqdef\bot, && i \in \{0,\dots,n\}\;.
  \end{align*}
  \item[{\boldmath $\?A_\rightarrow$}] Similarly to $\?{A}_{\downarrow}$, the
    automaton moves right while maintaining the depth relative to the
    initial node.  It fails if it reaches a node at a lesser depth
    that the initial node.  Otherwise, it finds the next node at a
    same depth as the initial node.  Formally,
    $Q_\rightarrow\eqdef\{q_0,q_1,\dots,q_n,q_f\}$,
    $q_{\rightarrow,0}\eqdef q_0$, the coloring
    $c_\rightarrow$ is identically $1$ on $Q_\downarrow$, there is a
    unique continuation state  $C_\rightarrow\eqdef\{q_f\}$ when
    reaching the right sibling, and $\delta_\rightarrow$ is defined by
  \begin{align*}
    \delta_\rightarrow(q_i, \langle p \rangle) &\eqdef (1,q_{i+1}), &&i<n, p \in N &
    \delta_\rightarrow(q_n, \langle p \rangle) &\eqdef\bot, &&p \in N \\
    \delta_\rightarrow(q_i, \langle /p \rangle) &\eqdef (1,q_{i-1}), &&i>1 , p \in N &
    \delta_\rightarrow(q_1, \langle /p \rangle) &\eqdef (1,q_f), &&p \in N \\
    \delta_\rightarrow(q_0, \langle /p \rangle) &\eqdef\bot, &&p \in N &
    \delta_\rightarrow(q_f, X) &\eqdef\bot, && X\in \Sigma\\
    \delta_\rightarrow(q_i, \langle /\mathsf{r} \rangle) &\eqdef\bot, && i \in \{0,\dots,n\}\;.
  \end{align*}
  \item[{\boldmath $\?A_\uparrow,\?A_\leftarrow$}]\IEEEhspace{2em} We define these
    automata similarly to $\?A_\downarrow$ and $\?A_\rightarrow$.
    Observe however that, because we always finish on opening
    brackets, it is not enough to exchange $-1$ and $1$ in the
    directions of transitions.
\end{description}\bigskip

Next, we consider the induction step for path formul\ae:
\begin{description}
\item[{\boldmath $\?A_{\pi_1;\pi_2}$}]\IEEEhspace{1em}  We combine the automata
    $\mathcal{A}_{\pi_1}$ and $\mathcal{A}_{\pi_2}$. We add
  transitions from the continuation states of $\mathcal{A}_{\pi_1}$ at
  opening nodes to the initial state of $\mathcal{A}_{\pi_2}$.
  Formally, $Q_{\pi_1;\pi_2}\eqdef Q_{\pi_1}\uplus Q_{\pi_2}$,
    $q_{\pi_1;\pi_2}\eqdef q_{\pi_1,0}$, $c_{\pi_1;\pi_2}$ preserves
  the priorities of $c_{\pi_1}$ and $c_{\pi_2}$,
  $C_{\pi_1;\pi_2}\eqdef C_{\pi_2}$ and $\delta_{\pi_1;\pi_2}$ is defined by
    \begin{align*}
     \delta_{\pi_1;\pi_2}(q_1,X) &\eqdef \delta_{\pi_1}(q_1,X),
     && q_1 \in Q_{\pi_1} \setminus C_{\pi_1}\\
     \delta_{\pi_1;\pi_2}(q_1,\langle p \rangle) &\eqdef\delta_{\pi_1}(q_1,\langle p \rangle) \vee (0,q_{\pi_2,0}), && q_1 \in C_{\pi_1}\\
     \delta_{\pi_1;\pi_2}(q_1,\langle /p \rangle) &\eqdef\delta_{\pi_1}(q_1,\langle /p \rangle), && q_1 \in C_{\pi_1}\\
     \delta_{\pi_1;\pi_2}(q_2,X) &\eqdef \delta_{\pi_2}(q_2,X), && q_2 \in Q_{\pi_2}
    \end{align*}
    for $X$ in $\Sigma$ and $p$ in $N$.
  \item[{\boldmath $\?A_{\pi_1+\pi_2}$}]\IEEEhspace{1.1em} This is a straightforward
    union: We define $Q_{\pi_1+\pi_2}\eqdef Q_{\pi_1}\uplus Q_{\pi_2}\uplus\{q_{\pi_1+\pi_2,0}\}$, $C_{\pi_1+\pi_2}\eqdef C_{\pi_1} \cup C_{\pi_2}$,
    $c_{\pi_1+\pi_2}\eqdef c_{\pi_1}\cup c_{\pi_2}\cup\{(q_{\pi_1+\pi_2,0},1)\}$, $\delta_{\pi_1+\pi_2}\eqdef\delta_{\pi_1} \cup \delta_{\pi_2}\cup 
    \{(q_{\pi_1+\pi_2,0}, X,(0,q_{\pi_1,0}) \vee (0,q_{\pi_2,0})) \mid X \in \Sigma\}$.
  \item[{\boldmath $\?A_{\pi^\ast}$}] This case is similar to that of
    $\?A_{\pi_1;\pi_2}$; we add transitions from the critical states
    of $\mathcal{A}_\pi$ to its own initial state.  Define
    $Q_{\pi^\ast}\eqdef Q_\pi\uplus\{q_{\pi^\ast,0}\}$,
    $C_{\pi^\ast}\eqdef\{q_{\pi^\ast,0}\}$, 
    $c_{\pi^\ast}\eqdef c_\pi\cup\{(q_{\pi^\ast,0},1)\}$, and 
    \begin{align*}
      \delta_{\pi^\ast}(q_{\pi^\ast,0},X) &\eqdef (0,q_{\pi,0}), && &
      \delta_{\pi^\ast}(q,\langle /p \rangle) &\eqdef \delta_\pi(q,X), && q \in C_\pi\\
      \delta_{\pi^\ast}(q,X) &\eqdef \delta_\pi(q,X), && q \in Q_\pi \setminus C_\pi &
      \delta_{\pi^\ast}(q,\langle p \rangle) &\eqdef \delta_\pi(q,X) \vee (0,q_{\pi^\ast,0}), && q\in C_\pi
    \end{align*}
    for $X$ in $\Sigma$ and $p$ in $N$.  Because we assigned an odd
    priority to $q_{\pi^\ast,0}$, the automaton cannot loop
    indefinitely in $q_{\pi^\ast,0}$ and must eventually continue.
  \item[{\boldmath $\?A_{\psi?}$}] Define $Q_{\psi?}\eqdef
    Q_\psi\uplus\{q_{\psi?,0},q_f\}$, $C_{\psi?}\eqdef\{q_f\}$,
    $c_{\psi?}$ extends $c_\psi$ with $c_{\psi?}(q_{\psi?,0})=c_{\psi?}(q_f)=1$, and
    \begin{align*}
      \delta_{\psi?}(q_{\psi?,0},X) &\eqdef (0,q_f) \wedge (0,q_{\psi,0}), &
      \delta_{\psi?}(q_f,X) &\eqdef\bot, &
      \delta_{\psi?}(q,X) &\eqdef \delta_\psi(q,X),
    \end{align*}
    for $X$ in $\Sigma$ and $q$ in $Q_\psi$.
\end{description}\bigskip

Finally, we consider the induction step for node formul\ae:
\begin{description}
\item[{\boldmath $\?A_{\tup\pi\psi}$}]\IEEEhspace{.2em} We construct the automaton by
  joining the critical states of $\mathcal{A}_\pi$ with the initial
  state of $\mathcal{A}_\psi$. Define $Q_{\tup\pi\psi}\eqdef Q_\pi\uplus Q_\psi$, $q_{\tup\pi\psi,0}\eqdef q_{\pi,0}$, $c_{\tup\pi\psi}\eqdef c_\pi\cup c_\psi$, and $\delta_{\tup\pi\psi}$ by
    \begin{align*}
     \delta_{\tup\pi\psi}(p,X)&\eqdef\delta_\pi(p,X), && p\in Q_\pi\setminus C_\pi&
     \delta_{\tup\pi\psi}(q,X)&\eqdef\delta_\psi(q,X), && q \in
     Q_\psi\\
     \delta_{\tup\pi\psi}(p,X)&\eqdef\delta_\pi(p,X)\vee(0,q_{\psi,0}), && p\in C_\pi
    \end{align*}
    for $X$ in $\Sigma$.
  \item[{\boldmath $\?A_{\psi_1\wedge\psi_2}$}]\IEEEhspace{1.1em} We do a simple
    conjunction of the automata $\?A_{\psi_1}$ and $\?A_{\psi_2}$: define $Q_{\psi_1\wedge\psi_2}\eqdef Q_{\psi_1}\uplus Q_{\psi_2}\uplus\{q_{\psi_1\wedge\psi_2,0}\}$, 
    $c_{\psi_1\wedge\psi_2}\eqdef c_{\psi_1} \cup c_{\psi_2} \cup \{(q_{\psi_1\wedge\psi_2,0},1)\}$, $\delta_{\psi_1\wedge\psi_2}\eqdef \delta_{\psi_1} \cup \delta_{\psi_2} \cup 
    \{(q_{\psi_1\wedge\psi_2,0},X, (0,q_{\psi_1,0}) \wedge (0,q_{\psi_2,0})) \mid X \in \Sigma\}$.

  \item[{\boldmath $\?A_{\neg\psi}$}] We essentially construct the dual
    $\#{dual}(\?A_\psi)$ of $\mathcal{A}_\psi$: the latter accepts the
    complement of the language accepted by
    $\mathcal{A}_\psi$. However, we need to ensure that only opening
    nodes $\langle p \rangle$ are accepted, thus intersect with the
    automaton $\mathcal{A}_\top$ that only accepts opening nodes.
    Formally,
    $\#{dual}(\?A_\psi)=\tup{Q_\psi,\Sigma,\delta_\psi,q_{\psi,0},c_{\neg\psi}}$
    where $\delta_{\neg\psi}(q,X)\eqdef\#{dual}(\delta_\psi(q,X))$ and
    $c_{\neg\psi}(q) \eqdef c_\psi(q) + 1$ for all $q \in Q$ and $X
    \in \Sigma$.  Here, $\#{dual}$ is a function from
    $\mathbb{B}^+(Q\times\{-1,0,1\})$ to itself that applies the usual
    DeMorgan's law.  It is easy to check that $\#{dual}(\mathcal{A}_\psi)$
    accepts the complement of $L(\mathcal{A}_\psi)$.
\end{description}

\section {{\boldmath $\varepsilon$}-Free Case}\label{ax-efhard}
We prove in this section %
Proposition~\ref{prop-efph}, thus showing that the PFMC problem is
\textsc{PSpace}-hard in this case.

\begin{proof}
  We reduce from the membership problem of linear bounded automata (LBA).
  Suppose we are given an LBA $M=\tup{Q,\Gamma,\Sigma,\delta,q_1,F}$
  with state set $Q$, tape alphabet $\Gamma$, input alphabet
  $\Sigma\subseteq\Gamma$, transition relation $\delta\subseteq
  Q\times\Gamma\times Q\times\Gamma\times\{-1,0,1\}$, initial state
  $q_1\in Q$, and set of final states $F\subseteq Q$.
  Let $Q=\{q_1,\dots,q_\ell\}$; we assume that
  $\Gamma=\{a_1,a_2,\dots,a_m\}$ contains two endmarkers $a_1 =
  \triangleleft$ and $a_2 = \triangleright$ that surround the input
  and are never erased nor crossed during the run of the machine.

  We are also given a string $x=b_1b_2\cdots b_n$ with each
  $b_i\in\Sigma$; $b_1 = \triangleleft$ and $b_n = \triangleright$. We
  have to decide whether $x$ is accepted by $M$.  We are going to
  construct a word $w$, a CFG $G$, and a $\PDLcr[\downarrow]$ formula
  $\varphi$, s.t.\ the PFMC problem has a solution for
  $\tup{w,G,\varphi}$ iff $M$ accepts $x$.

  \paragraph{Encoding as Linear Trees}
  A configuration of $M$ is a sequence of length $n$ of form
  $\triangleleft \gamma q\gamma'\triangleright$ where $q$ is the
  current state in $Q$, $\triangleleft\gamma\gamma'\triangleright$ is
  the current tape contents, and $|\triangleleft\gamma|=h$ indicates
  that the head is currently on the last symbol of
  $\triangleleft\gamma$, i.e.\ the $h$th symbol of the tape.

  We encode such a configuration by a contiguous sequence $\alpha$ of nodes
  as follows:
  \begin{itemize}
  \item The first node is $S$ and it is followed by a sequence of $n$
    nodes, among which one is labeled $H$ and the others $\bar{H}$; the
    position of $H$ in this sequence denotes the position of the head
    in the configuration of $M$.
  \item This sequence is followed by a sequence of $\ell$ nodes, one
    labeled $C$ and the others $\bar{C}$, which together describe the current
    state as $q_k$ if the occurrence of $C$ is the $k$th symbol in the
    sequence.
  \item Then we encode the tape contents as $n$ successive sequences
    each of length $m$ of nodes, with each time one labeled $A$ and
    the others $\bar{A}$. The
    $i$th such sequence encodes the contents of the $i$th cell of the
    tape of $M$ with $A$ occurring at the $j$th position indicating
    that this cell contains $a_j$.
  \end{itemize}
  Thus $\alpha$ is of length $1+n+\ell+n m$.

  \paragraph{String and Grammar}
  We \emph{fix} $w\eqdef a$ and $G\eqdef\tup{\{S,H,\bar{H},A,\bar{A},C,\bar{C}\},\{a\},P,S}$
  with productions
  \begin{align*}
    S&\to H\mid \bar{H}\\
    H&\to H\mid\bar{H}\mid C\mid\bar{C}&
    C&\to C\mid\bar{C}\mid A\mid\bar{A}&
    A&\to A\mid\bar{A}\mid S\mid a\\
    \bar{H}&\to H\mid\bar{H}\mid C\mid\bar{C}&
    \bar{C}&\to C\mid\bar{C}\mid A\mid\bar{A}&
    \bar{A}&\to A\mid\bar{A}\mid S\mid a\;.
  \end{align*}
  Therefore, the trees in the parse forest $L_{G,w}$ are essentially
  sequences over $N^\ast\cdot\{a\}$.  Clearly, all the encodings of finite
  runs of any LBA $M$ will be in this set; it will be the formula's
  task to look for an accepting run of our particular $M$ on $x$ among
  all these trees.

  \paragraph{The Formula {\boldmath $\varphi_{M,x}$}}
  Let us turn to the definition of our $\PDLcr[\downarrow]$ formula.  We
  start by defining low-level formul\ae\ useful for testing the
  properties of the current configuration: assume we are on an
  $S$-labeled node:
  \begin{align*}
    \#h(h)&\eqdef(\tup{\downarrow^h}H)\wedge\bigwedge_{i\in\{1,\dots,n\}\setminus\{h\}}\tup{\downarrow^j}\bar{H}
    \intertext{tests whether the head is at position $h$.  In the
      same way,}
    \#q(k)&\eqdef\tup{\downarrow^n}\big((\tup{\downarrow^k}C)\wedge\bigwedge_{k'\in\{1,\dots,\ell\}\setminus\{k\}}\tup{\downarrow^{k'}}\bar{C}\big)
    \intertext{then tests whether the current state is $q_k$, and}
    \#p(i,j)&\eqdef\tup{\downarrow^{n+\ell+i
        m}}\big((\tup{\downarrow^j}A)\wedge\bigwedge_{j'\in\{1,\dots,m\}\setminus\{j\}}\tup{\downarrow^{j'}}\bar{A}\big)
    \intertext{tests whether the $i$ position on the tape is symbol
      $a_j$.  Finally, we can go to the next configuration by the path}
    \#{next}&\eqdef\downarrow^{n+\ell+n m}\;.
    \intertext{We can now check that a parse tree of $L_{G,w}$ is
      really the encoding of an accepting run of $M$ on $x$.  First,
      at each $S$ node, we should find a full configuration:}
    \varphi_{\mathit{conf}}&\eqdef
           [\downarrow^\ast]S\imply\bigvee_{h=1}^n\#h(h)\wedge\bigvee_{k=1}^\ell\#q(k)\wedge\bigwedge_{i=1}^n\bigvee_{j=1}^m\#p(i,j)\wedge\tup{\#{next}}a\vee
           S\;.
    \intertext{The initial configuration should have its head on the
      initial position $1$, be in the initial state $q_1$, and have $x=b_1\cdots
      b_n$ as tape contents:}
    \varphi_{\mathit{init}}&\eqdef
    S\wedge\#h(1)\wedge\#q(1)\wedge\bigwedge_{i=1}^n\#p(i,b_i)\;.
    \intertext{The leaf of the tree should be reached in a final
      configuration:}
    \varphi_{\mathit{final}}&\eqdef[\downarrow^\ast](S\wedge\tup{\#{next}}a)\imply\bigvee_{q_k\in
      F}\#q(k)\;.
    \intertext{Successive configurations should respect the
      transition relation:}
    \varphi_{\mathit{trans}}&\eqdef[\downarrow^\ast](S\wedge\neg\tup{\#{next}}a)\imply\bigvee_{h=1}^n\bigvee_{k=1}^\ell\bigvee_{c=1}^m\bigvee_{(q_k,a_c,q_{k'},a_{c'},d)\in\delta}\big(\#h(h)\wedge\#q(k)\wedge\#p(h,c)\\&\wedge\big(\bigwedge_{h\neq
      i=1}^n\bigvee_{j=1}^m\#p(i,j)\wedge\tup{\#{next}}\#p(i,j)\big)\wedge\tup{\#{next}}(\#h(h+d)\wedge\#q(k')\wedge\#p(h,c'))\big)\,.
    \intertext{We finally define our $\PDLcr[\downarrow]$ formula as
      the conjunction of the previous formul\ae:}
    \varphi_{M,x}&\eqdef\varphi_{\mathit{conf}}\wedge\varphi_{\mathit{init}}\wedge\varphi_{\mathit{final}}\wedge\varphi_{\mathit{trans}}\;.
\end{align*}
To conclude, we observe that a tree in $L_{G,w}$ is a model of
$\varphi_{M,x}$ iff there is an accepting run of $M$ on $x$.  As $G$
and $w$ are fixed and $\varphi_{M,x}$ can be computed in space
logarithmic in the size of $\tup{M,x}$, this proves the
\textsc{PSpace}-hardness of the PFMC problem in the $\varepsilon$-free
case.
\end{proof}

\end{document}